\let\c@author\relax
\let\bibfont\relax
\renewcommand*{\glossarysection}[2][]{}
\theoremstyle{plain}
\newtheorem{proposition}{Proposition}
\newtheorem{corollary}{Corollary}
\theoremstyle{definition}
\newtheorem{remark}{Remark}
\newtheorem{definition}{Definition}
\newtheorem{example}{Example}
\newtheorem{algorithm}{Algorithm}
\newcommand{\figurewidth}{0.5\columnwidth}
\newcommand{\subfigurewidth}{0.25\textwidth}
\renewcommand{\figurename}{Fig.}
\newcommand{\orcid}[1]{
  \href{https://orcid.org/#1}{
      \includegraphics[height=2ex]{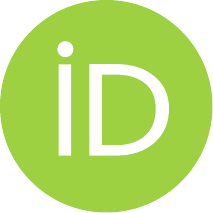}
      \url{https://orcid.org/#1}
  }
}
\newcommand{\abbreviations}[1]{%
  \nonumnote{\textit{Abbreviations:\enspace}#1}}
\newacronym[
    plural=pdfs,
    firstplural=probability density functions]{pdf}{pdf}{
        probability density function
}
\newacronym[
    plural=CDFs,
    firstplural=cumulative distribution functions]{cdf}{CDF}{
        cumulative distribution function
}
\newacronym{mle}{MLE}{maximum likelihood estimation}
\newacronym{pll}{PLL}{penalized log-likelihood}
\newacronym[
    plural=rvs,
    firstplural=random variables]{rv}{rv}{random variable}
\newacronym{iff}{iff}{if and only if}
\newacronym[
    plural=EVCs,
    firstplural=extreme-value copulas]{evc}{EVC}{extreme-value copula}
\newacronym[
    plural=PFs,
    firstplural=Pickands functions]{pf}{PF}{Pickands function}
\newacronym[
    plural=SBEVCs,
    firstplural=semiparametric bivariate extreme-value copulas]{sbevc}{SBEVC}{\textit{semiparametric bivariate extreme-value copula}}
\newacronym{cobs}{COBS}{\textit{constrained B-splines}}
\newacronym[
    plural=WTs,
    firstplural=Williamson transforms]{wt}{WT}{Williamson transform}
\newacronym{ss}{SS}{simulation study}
\newacronym{gc}{GC}{Gini coefficient}
\newacronym{clr}{CLR}{centred log-ratio}
\newacronym{tvd}{TVD}{total variation distance}
\newacronym{rmise}{RMISE}{root mean integrated squared error}
\newacronym[
    plural=GWs,
    firstplural=gravitational waves]{gw}{GW}{gravitational wave}
\newacronym[
    plural=BMs,
    firstplural=binary mergers]{bm}{BM}{binary merger}
\newacronym[
    plural=ZBSs,
    firstplural=zero-integral B-splines]{zbs}{ZBS}{zero-integral B-spline}
\begin{document}
    
\newcommand\papertitle{Semiparametric bivariate extreme-value copulas}

\title{\papertitle\tnoteref{funding}}

\tnotetext[funding]{
    This research did not receive any specific grant from funding agencies
    in the public, commercial, or not-for-profit sectors.
    Declarations of interest: none.
}

\newcommand\javierfdezserrano{Javier Fern\'andez Serrano}
\newcommand\emailjavierfdezserrano{javier.fernandezs01@estudiante.uam.es}
\newcommand\orcidjavierfdezserrano{0000-0001-5270-9941}

\author{\javierfdezserrano\fnref{orcid}}
\address{
    Departamento de Matem\'aticas, Universidad Aut\'onoma de Madrid, Madrid, Spain
}
\ead{\emailjavierfdezserrano}
\fntext[orcid]{\orcid{\orcidjavierfdezserrano}}

\date{}

\hypersetup{
  pdftitle={\papertitle},
  pdfauthor={\javierfdezserrano}
}

    \newcommand\abstractmeta {
    Extreme-value copulas arise as the limiting dependence structure of component-wise maxima.
    Defined in terms of a functional parameter, they are one of the most widespread copula families due to their flexibility and ability to capture asymmetry.
    Despite this, meeting the complex analytical properties of this parameter in an unconstrained setting remains a challenge, restricting most uses to models with very few parameters or nonparametric models.
    Focusing on the bivariate case, we propose a novel semiparametric approach.
    Our procedure relies on a series of transformations, including Williamson's transform and starting from a zero-integral spline.
    Without further constraints, wholly compliant solutions can be efficiently obtained through maximum likelihood estimation, leveraging gradient optimization.
    We successfully conducted several experiments on simulated and real-world data.
    Our method outperforms another well-known nonparametric technique over small and medium-sized samples in various settings.
    Its expressiveness is illustrated with precious data gathered by the gravitational wave detection LIGO and Virgo collaborations.
}

\begin{abstract}
    \abstractmeta
\end{abstract}

\hypersetup {
  pdfsubject={\abstractmeta},
}

\newcommand\extremevaluecopula{extreme-value copula}
\newcommand\bivariatecopula{bivariate copula}
\newcommand\semiparametric{semiparametric model}
\newcommand\williamson{Williamson's transform}
\newcommand\compositionalspline{compositional spline}

\newcommand\msccopulas{62H05}

\newcommand\mscmultivariate{62H12}

\newcommand\msccomputationalmethods{62-08}

\begin{keyword}
    \bivariatecopula \sep
    \compositionalspline \sep
    \extremevaluecopula \sep
    \semiparametric \sep
    \williamson
    \newline
    \MSC[2020]{
      Primary
      \msccopulas\sep
      \mscmultivariate\sep
      Secondary
      \msccomputationalmethods
    }
\end{keyword}

\hypersetup {
  pdfkeywords={
    \bivariatecopula,
    \compositionalspline,
    \extremevaluecopula,
    \semiparametric,
    \williamson
  },
}

    \maketitle

    \section{Introduction}

\label{sec:introduction}

A copula $C$ is an \gls{evc} if it is the weak limit of copulas emerging from component-wise maxima~\cite{Gudendorf2010}.
In the bivariate case, \glspl{evc} can be expressed as
\begin{equation*}
    C(u, v) = \exp \left\{
        \log (uv) A \left[ \frac{\log(u)}{\log(uv)} \right]
    \right\}\,,
    \
    \text{for} \ u, v \in  (0, 1)^2 \,,
\end{equation*}
where $A: [0, 1] \longrightarrow \mathbb{R}$, known as the \gls{pf}, satisfies the following two constraints:
\begin{enumerate}[noitemsep]
    \item
        $\max\{t, 1 - t\} \leq A(t) \leq 1$, for all $t \in [0, 1]$.
    \item
        $A$ is convex.
\end{enumerate}
The segments making the lower bound for $A$ are called the \textit{support} lines of the \gls{pf}.
\figurename~\ref{fig:pickands} shows the \gls{pf} geometry.

\begin{figure}[ht]
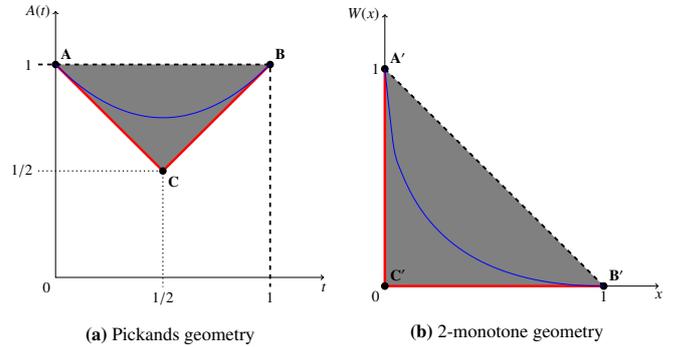

    \centering
    \begin{subfigure}{\figurewidth}
        \centering
        \resizebox{\columnwidth}{!}{
            \input{tikz/01-a-pickands-geometry.tex}
        }
        \caption{Pickands geometry}
        \label{fig:pickands}
    \end{subfigure}%
    \begin{subfigure}{\figurewidth}
        \centering
        \resizebox{\columnwidth}{!}{
            \input{tikz/01-b-tawn-pickands.tex}
        }
        \caption{2-monotone geometry}
        \label{fig:williamson}
    \end{subfigure}
    \caption{
        On the left, \gls{pf} geometry.
        The admissible region for its graph appears in grey.
        The support lines show in red.
        An example of \gls{pf}, namely $A(t) = t^2 - t + 1$, is drawn in blue.
        On the right, the geometry of a 2-monotone function derived from a \gls{pf} through an affine transformation mapping $\mathbf{A}$, $\mathbf{B}$ and $\mathbf{C}$ to $\mathbf{A}'$, $\mathbf{B}'$ and $\mathbf{C}'$.
        The transformed version of the \gls{pf} on the left, $W(x) = x - 2 \sqrt{x} + 1$, is drawn in blue.
        The graph of the function $W$ ranges from $\mathbf{A}'$ to $\mathbf{B}'$ and never crosses the dashed line segment between these two points.
    }
    \label{fig:pickands-and-williamson}
\end{figure}

The \gls{pf} constraints are not inherently satisfied by conventional approximation methods.
Thus, most \gls{evc} modelling depends on well-known one-parameter symmetrical families, like in Table~\ref{tab:evc-families}.
\citeauthor{Khoudraji1995}'s device allows obtaining asymmetrical \glspl{evc} from the latter, somewhat extending the applicability of parametric models~\cite{Khoudraji1995, Eschenburg2013}.
If $A$ is a \gls{pf}, given $\alpha, \beta \in (0, 1]$, the following will also be a \gls{pf}:
\begin{equation}
    \label{eq:pickands-khoudraji}
    A_{\alpha, \beta}(t) = (1 - t)(1 - \alpha) + t (1 - \beta) +
    w \ A \left( \frac{t \beta}{w} \right) \,,
\end{equation}
where $w = (1 - t) \alpha + t \beta$.
Nonparametric methods are currently the only alternative.

\begin{table}
    \centering
    \resizebox{0.7\columnwidth}{!} {
        \begin{tabular}{lcr}
            \toprule
            Family & $A_{\theta}(t)$ & $\theta$ range \\
            \midrule
            Gumbel &
            $\left[ t^{\theta} + (1 - t)^{\theta} \right]^{1/\theta}$ &
            $[1, \infty)$ \\
            Galambos &
            $1 - \left[ t^{-\theta} + (1 - t)^{-\theta} \right]^{-1/\theta}$ &
            $(0, \infty)$ \\
            \bottomrule
        \end{tabular}
    }
    \caption{
        Main one-parameter \gls{evc} families~\cite{Kamnitui2019}.
        The above families are symmetrical.
        \citeauthor{Khoudraji1995}'s procedure provides a means for introducing asymmetry.
    }
    \label{tab:evc-families}
\end{table}

\paragraph{Related work}

\citeauthor{Vettori2017} provide a comprehensive review of \glspl{evc}~\cite{Vettori2017}.
In general, nonparametric models demonstrate greater flexibility than parametric ones.
Parametric models using~\citeauthor{Khoudraji1995}'s device perform well against nonparametric ones in dimensions higher than two and with mild asymmetry.

One of the first estimators for the \gls{pf} was proposed by~\citeauthor{Pickands1981} in bivariate survival analysis~\cite{Pickands1981}.
However,~\citeauthor{Pickands1981}'s method produces almost surely~\cite{Jimenez2001} non-convex \glspl{pf} over $[0, 1]$.
\citeauthor{Pickands1981} himself proposed in~\cite{Pickands1981} to use the greatest convex minorant of the original estimator, which remains one of the most practical and efficient approaches.

Perhaps the most widespread nonparametric method is due to~\citeauthor{Caperaa1997}~\cite{Caperaa1997}, from which it borrows its name CFG.
They observe that, given a random sample $\{(U_i, V_i)\}_{i = 1}^n$ from an \gls{evc} with \gls{pf} $A$, the transformation $Z_i = \log U_i / \log (U_i V_i)$ is distributed according to the \gls{cdf}
\begin{equation}
    \label{eq:h-cdf}
    H(z) = z + z(1 - z) \frac{A'(z)}{A(z)} \,.
\end{equation}
One can empirically estimate $H$ with some $\tilde{H}$ and solve~\eqref{eq:h-cdf} for an estimator
\begin{equation}
    \label{eq:pickands-from-h}
    \tilde{A}(t)
    = \exp \left\{ \int_0^t \frac{\tilde{H}(z) - z}{z (1 - z)} \ dz \right\}
    \,.
\end{equation}
The estimator $\tilde{A}$ is not convex in general either.
\citeauthor{Jimenez2001} propose two modified versions of the CFG that satisfy the convexity constraint~\cite{Jimenez2001}.

Most estimation methods until the early 2010s are variants of either~\citeauthor{Pickands1981}', CFG or both~\cite{Vettori2017}.
More recent advances have focused on polynomials and splines.
For instance,~\citeauthor{Guillotte2016} study the conditions under which a polynomial, expressed in Bernstein form, is a \gls{pf}~\cite{Guillotte2016}.
\citeauthor{Marcon2017} use Bernstein-B\'ezier polynomials to enforce some \gls{pf} constraints~\cite{Marcon2017}.
\citeauthor{Cormier2014} use constrained quadratic smoothing B-splines to develop a compliant \gls{pf} in a nonparametric fashion using the \textsf{R} \texttt{cobs} package~\cite{Cormier2014}.

Previously,~\citeauthor{Einmahl2009} had introduced a compliant nonparametric estimator requiring constrained optimization and targeting an equivalent definition of \glspl{pf}~\cite{Einmahl2009}.
The \gls{pf} can be expressed~\cite{Guillotte2016} as
\begin{equation*}
    A(t) = \int_0^1 \max \{ t(1 - z), z(1 - t) \} \ d\mathcal{H}(z)\,,
\end{equation*}
where $\mathcal{H}$ is the so-called \textit{spectral} measure on $[0, 1]$: a finite measure satisfying $\int_0^1 z \ d\mathcal{H}(z) = 1$.
Under absolute continuity of $A'$~\cite{Guillotte2016}, $\mathcal{H}$ admits a decomposition
\begin{equation}
    \label{eq:spectral-measure-decomposition}
    \mathcal{H}(B) = \mathcal{H}_0 \mathbf{1}_B(0) + \int_B \eta(z) \ dz +
    \mathcal{H}_1 \mathbf{1}_B(1)\,,
\end{equation}
where $\mathbf{1}_B$ denotes the indicator function on $B$, $\eta = A''$ almost everywhere on $(0, 1)$, $\mathcal{H}_0 = 1 + A'(0^+)$ and $\mathcal{H}_1 = 1 - A'(1^-)$.

The concept of Williamson's transform has recently irrupted in copula theory~\cite{Bacigal2017}.
\citeauthor{McNeil2009} employ it in their study of $d$-monotone Archimedean generators~\cite{McNeil2009, McNeil2010}.
\citeauthor{Charpentier2014} also use it to model multivariate Archimax copulas~\cite{Charpentier2014}.
Even though they do not consider it in their work,~\citeauthor{Fontanari2020} introduce a subclass of Archimedean copulas called Lorenz copulas, where Williamson's transform could play a crucial role in estimation, as we later specify.

\paragraph{Goals}

Some accepted methods fail to meet all the constraints required by the \gls{pf}, even in the bivariate case~\cite{Vettori2017}.
Semiparametric approaches, like the one introduced by~\citeauthor{HernandezLobato2011} for Archimedean copulas~\cite{HernandezLobato2011}, have not been explored in the context of \glspl{evc}.

The research community is currently focusing on $n$-variate extensions~\cite{Gudendorf2012}.
However, a more flexible and sound construction is missing in the bivariate context.
The work by~\citeauthor{Kamnitui2019} suggests that the bivariate \gls{evc} family is not as narrow, especially under asymmetry~\cite{Kamnitui2019}.
Our method will thus exclusively focus on the bivariate setting.

The semiparametric procedure we introduce here offers the following advantages over state-of-the-art methods:
\begin{itemize}[label=$\diamond$, leftmargin=*]
    \setlength\itemsep{0ex}
    \item Built-in \gls{pf} constraint compliance, mapping any $\boldsymbol{\theta} \in \mathbb{R}^{d}$, for $d$ large, to some \gls{pf} $A_{\boldsymbol{\theta}}$ ranging in a broad spectrum of dependence strengths and asymmetries.
    \item Optimization of $\boldsymbol{\theta}$ via \gls{mle}, taking the most advantage of each observation, even in small samples.
    \item Ability to penalize model complexity during the optimization process, especially for large $d$, reducing overfitting and opening opportunities for Bayesian analysis.
\end{itemize}

The approach by~\citeauthor{Cormier2014}~\cite{Cormier2014}, also focusing on the bivariate case, allows for a large $d$, but does not retain control over the domain of $\boldsymbol{\theta}$.
Given a sample $\{(U_i, V_i)\}_{i = 1}^n$ from an \gls{evc} with \gls{pf} $A$, the \glspl{rv} $(Z_i, T_i)$, where $Z_i = \log U_i / \log (U_i V_i)$, $T_i = \log \hat{C}(U_i, V_i) / \log (U_i V_i)$, and $\hat{C}$ is the empirical estimate of $C$, lie close to $A$'s graph.
Then one can perform a constrained B-splines regression on those points.
However, the estimation procedure chooses the coordinates to satisfy the \gls{pf} constraints since not all parameters would be valid.
Hence, their method lacks a proper structure, falling into the nonparametric category.
Difficulties are bound to appear with small samples after relying on the empirical copula $\hat{C}$ and a regression approach.

\paragraph{Outline}

We introduce in Section~\ref{sec:method} our semiparametric method.
We formally construct and estimate a large subclass of~\glspl{evc} and explore their properties.
\ref{sec:proofs} includes all the proofs.
We then test our method on a \gls{ss} and a real-world case study in Section~\ref{sec:results}.
Section~\ref{sec:discussion} provides further comments on our method's performance and general possibilities.
Finally, we offer some concluding remarks in Section~\ref{sec:conclusions}.

    \section{Method}

\label{sec:method}

In the following sections, we will cover (i) the construction of a new semiparametric \gls{evc}, (ii) some of its properties, (iii) estimation algorithms, (iv) simulation, and (v) a possible solution to one of its limitations.

\subsection{Construction}

A copula arising from our construction will be called a \gls{sbevc}.
We will also refer to our method as \gls{sbevc}.
The construction of \glspl{sbevc} encompasses several steps.
The following sections will go through them from our \gls{pf} goal to a coordinates vector.
In each stage, the complexity of the parameter decreases, from an infinite-dimensional functional parameter with stringent constraints to an arbitrary $\boldsymbol{\theta} \in \mathbb{R}^d$.
That is not the natural order in the estimation phase, but it constitutes the safest path to weigh the sacrifices we make along the way.
Notwithstanding, we briefly summarize the journey in its final form:
\begin{enumerate}[leftmargin=*]
    \setlength\itemsep{0ex}
    \item Given $\boldsymbol{\theta} \in \mathbb{R}^d$, we build a \gls{zbs} $p_{\bm{\theta}}$ defined in $[0, 1]$.
    \item We apply to $p_{\bm{\theta}}$ the inverse \gls{clr} transformation to obtain a \gls{pdf} $f_{\bm{\theta}}$ supported on $[0, 1]$.
    \item We integrate $f_{\bm{\theta}}$ using the \gls{wt} to obtain a 2-monotone function $W_{\bm{\theta}}$ supported on $[0, 1]$.
    \item We affinely transform $W_{\bm{\theta}}$ to arrive at the \gls{pf} $A_{\bm{\theta}}$, as depicted in \figurename~\ref{fig:pickands-and-williamson}.
\end{enumerate}

\subsubsection{Affine transformation}

The support lines of the \gls{pf} resemble a pair of coordinate axes if rotated and scaled.
Let $M$ be the unique 2-dimensional affine transformation mapping $(0, 1)$, $(0, 0)$ and $(1, 0)$ to $(0, 1)$, $(1/2, 1/2)$ and $(1, 1)$, respectively.
$M$ and $M^{-1}$ take the form
\begin{multicols}{2}
    \noindent
    \begin{equation}
        \label{eq:affine-map}
        M(x, w) =
        \frac{1}{2}
        \begin{pmatrix}
            1 + x - w \\
            1 + x + w
        \end{pmatrix}
        \,,
    \end{equation}
    \columnbreak
    \begin{equation}
        \label{eq:affine-map-inverse}
        M^{-1}(t, a) =
        \begin{pmatrix}
            t + a - 1 \\
            a - t
        \end{pmatrix}
        \,.
    \end{equation}
\end{multicols}
\vspace*{-\multicolsep}
Under certain conditions, the inverse mapping $M^{-1}$ transforms the graph of a \gls{pf}, $\left\{ (t, A(t)) \ | \ t \in [0, 1] \right\}$, into the graph of a 2-monotone function $W$ defined on $[0, 1]$ and satisfying $W(0) = 1$ and $W(1) = 0$.
Here, 2-monotone stands for non-increasing and convex~\cite{McNeil2009, Charpentier2014}.
\figurename~\ref{fig:pickands-and-williamson} shows the transition from one to the other.
Note that for a twice differentiable function $W$ with the above boundary constraints, 2-monotonicity is equivalent to $W'(x) \leq 0$ and $W''(x) \geq 0$, for all $x \in (0, 1)$.

\begin{proposition}
    \thlabel{prop:A-W-equivalence}
    Let all the upcoming functions be twice differentiable on $(0, 1)$.
    By means of~\eqref{eq:affine-map} and its inverse~\eqref{eq:affine-map-inverse}, there is a one-to-one correspondence between \glspl{pf} $A$ satisfying $A(t) > 1 - t$, for all $t \in (0, 1/2]$, and 2-monotone functions $W$ defined on $[0, 1]$ and satisfying $W(0) = 1$, $W(1) = 0$.
    Namely, $A$ can be obtained from $W$ as
    \begin{equation}
        \label{eq:rotation}
        \left\{
            \begin{aligned}
                t(x) &= \frac{1}{2} \left( 1 + x - W(x) \right) \\
                A(t(x)) &= \frac{1}{2} \left( 1 + x + W(x) \right)
            \end{aligned}
        \right.
    \end{equation}
    and conversely, $W$ from $A$, as
    \begin{equation}
        \label{eq:rotation-inverse}
        \left\{
            \begin{aligned}
                x(t) &= t + A(t) - 1 \\
                W(x(t)) &= A(t) - t
            \end{aligned}
        \right.\,,
    \end{equation}
    where both $t(x)$ and $x(t)$ are automorphisms of $[0, 1]$.
    Moreover,
    \begin{multicols}{2}
        \noindent
        \begin{equation}
            \label{eq:pickands-first-derivative}
            A'(t(x)) = \frac{1 + W'(x)}{1 - W'(x)} \,,
        \end{equation}
        \columnbreak
        \begin{equation}
            \label{eq:pickands-second-derivative}
            A''(t(x)) = \frac{4 \ W''(x)}{\left( 1 - W'(x) \right)^3} \,,
        \end{equation}
    \end{multicols}
    \vspace*{-\multicolsep}
    \noindent and
    \begin{multicols}{2}
        \noindent
        \begin{equation}
            \label{eq:pickands-first-derivative-inverse}
            W'(x(t)) = \frac{A'(t) - 1}{A'(t) + 1} \,,
        \end{equation}
        \columnbreak
        \begin{equation}
            \label{eq:pickands-second-derivative-inverse}
            W''(x(t)) = \frac{2 \ A''(t)}{\left( 1 + A'(t) \right)^3} \,.
        \end{equation}
    \end{multicols}
    \vspace*{-\multicolsep}
\end{proposition}

\begin{remark}
    We do not need the smoothness assumption in~\thref{prop:A-W-equivalence} in either direction.
    We can argue that affine transformations map convex epigraphs into convex epigraphs\footnote{A function is convex \gls{iff} its epigraph is a convex set.}.
    Hence, the convexity of $A$ is equivalent to that of $W$.
    Nonetheless, differentiability is a convenient requirement for our construction.
\end{remark}

\begin{example}
    \thlabel{ex:polynomial-pickands-W}
    Elaborating on \figurename~\ref{fig:pickands-and-williamson}, plugging $A(t) = t^2 - t + 1$ into~\eqref{eq:rotation-inverse} yields the $W(x) = x - 2 \sqrt{x} + 1$ we see in \figurename~\ref{fig:williamson}.
\end{example}

\begin{example}
    \thlabel{ex:power-W-family}
    The family of functions $W_{\theta}(x) = (1 - x)^{\theta}$, where $\theta \in [1, \infty)$, meet the conditions in \thref{prop:A-W-equivalence} and thus produce \glspl{evc}.
\end{example}

A $W$ function like the one defined in~\thref{prop:A-W-equivalence} induces a spectral measure~\eqref{eq:spectral-measure-decomposition} by means of
\begin{equation}
    \label{eq:W-spectral-density}
    \eta(z) =
    \frac{4 \ W''(t^{-1}(z))}{\left[ 1 - W'(t^{-1}(z)) \right]^3}
\end{equation}
and
\begin{multicols}{2}
    \noindent
    \begin{equation}
        \label{eq:W-spectral-point-mass-0}
        \mathcal{H}_0 = \frac{2}{1 - W'(0^+)} \,,
    \end{equation}
    \columnbreak
    \begin{equation}
        \label{eq:W-spectral-point-mass-1}
        \mathcal{H}_1 = \frac{-2 W'(1^-)}{1 - W'(1^-)} \,.
    \end{equation}
\end{multicols}
\vspace*{-\multicolsep}
\noindent Such a $W$ fails to attain the comonotonic copula, which has \gls{pf} $A(t) = \max \{ 1 - t, t \}$.
However, it can still model independence if $W(x) = 1 - x$.

\subsubsection{Williamson's transform}

Transitioning from $A$ to $W$ is cheap.
However, $W$ still poses stringent constraints on derivatives and boundary conditions.
We can solve them by taking $W$ as the \gls{wt} of a \gls{rv} supported on $[0, 1]$ that places no mass at zero.

\begin{definition}[Williamson's transform]
    Let $F$ be the \gls{cdf} of a non-negative \gls{rv} satisfying $F(0) = 0$.
    We define the \textit{\gls{wt}} of $F$ as
    \begin{equation*}
        \mathfrak{W}\{F\}(x) =
        \int_x^{\infty} \left( 1 - \frac{x}{r} \right) dF(r) \,.
    \end{equation*}
\end{definition}

A fundamental result in~\cite{McNeil2009} states that $\Psi = \mathfrak{W}\{F\}$ \gls{iff} $\Psi$ is 2-monotone and satisfies the boundary conditions $\Psi(0) = 1$ and $\Psi(\infty) = \lim_{x \rightarrow \infty} \Psi(x) = 0$.
Moreover, such an $F$ is unique and can be retrieved from $\Psi$ as $F(x) = 1 - \Psi(x) + x \ \Psi'(x^{+})$.
It can be easily checked that the support of $F$ is $[0, x^*]$, where $x^* = \inf \{ x \in \mathbb{R} \cup \{ \infty \} \ \vert \ \Psi(x) = 0 \}$.
In our case, the support is bounded, since $W(1) = 0$.
Therefore, we get the following corollary.

\begin{corollary}
    \thlabel{cor:williamson-characterization}
    A function $W: [0, 1] \rightarrow \mathbb{R}$ is 2-monotone\footnote{Non-negative, non-increasing and convex.} with $W(0) = 1$ and $W(1) = 0$ \gls{iff} it can be expressed as
    \begin{equation*}
        W(x) = \int_x^1 \left( 1 - \frac{x}{r} \right) dF(r) \,,
    \end{equation*}
    for some unique \gls{cdf} $F$ supported on $[0, 1]$ and such that $F(0) = 0$.
\end{corollary}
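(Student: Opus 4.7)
The plan is to reduce the statement to the \citeauthor{McNeil2009} characterization already quoted in the text: a function $\Psi$ on $[0, \infty)$ is the Williamson transform of a CDF $F$ with $F(0) = 0$ if and only if $\Psi$ is 2-monotone with $\Psi(0) = 1$ and $\Psi(\infty) = 0$, and in that case $F(x) = 1 - \Psi(x) + x \Psi'(x^+)$. The corollary is exactly this theorem specialized to the case in which $\Psi$ is identically zero beyond some finite threshold, and the whole argument amounts to translating $W(1) = 0$ (domain $[0, 1]$) into $\mathrm{supp}(F) \subset [0, 1]$.

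For the forward implication, I would first extend the given $W$ to $\tilde W : [0, \infty) \to \mathbb{R}$ by declaring $\tilde W(x) = 0$ for $x \geq 1$. The nontrivial point to verify is that 2-monotonicity survives the extension: non-negativity and the non-increasing property are clear, and convexity across $x = 1$ reduces to checking that the right derivative at $1$ dominates the left derivative, i.e. $0 \geq W'(1^-)$, which holds because $W$ is non-increasing on $[0, 1]$. Since $\tilde W(0) = 1$ and $\tilde W(\infty) = 0$, the \citeauthor{McNeil2009} theorem produces a unique CDF $F$ on $[0, \infty)$ with $F(0) = 0$ and $\tilde W = \mathfrak{W}\{F\}$. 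Applying the recovery formula at $x = 1$ gives $F(1) = 1 - \tilde W(1) + \tilde W'(1^+) = 1 - 0 + 0 = 1$, so $F$ is supported on $[0, 1]$. Restricting the Williamson integral to $[x, 1]$ is then legitimate since $F$ places no mass beyond $1$, matching the stated expression for $W$.

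For the reverse implication, starting from a CDF $F$ supported on $[0, 1]$ with $F(0) = 0$, I would view $F$ as a CDF on $[0, \infty)$ (with $F(x) = 1$ for $x \geq 1$) and invoke the theorem in the opposite direction: $\mathfrak{W}\{F\}$ is automatically 2-monotone on $[0, \infty)$ with value $1$ at $0$ and limit $0$ at infinity. The only thing to verify is $\mathfrak{W}\{F\}(1) = 0$, which is immediate because the integrand $1 - 1/r$ vanishes at $r = 1$ and $F$ charges nothing beyond that, so the restriction to $[0, 1]$ inherits all required properties; uniqueness of $F$ is inherited from the uniqueness clause of the \citeauthor{McNeil2009} result. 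The only mild obstacle throughout is checking that the zero-extension preserves convexity at the seam $x = 1$, which, as noted, reduces to a single sign inequality on $W'(1^-)$.
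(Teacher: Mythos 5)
Your proposal is correct and follows essentially the same route as the paper, which obtains the corollary directly from the quoted \citeauthor{McNeil2009} characterization by observing that the support of $F$ is $[0, x^*]$ with $x^*$ the first zero of the transform, so that $W(1)=0$ forces $\mathrm{supp}(F)\subset[0,1]$. Your write-up simply makes explicit the details the paper leaves implicit (the zero-extension of $W$ to $[0,\infty)$, convexity at the seam $x=1$, and the use of the recovery formula to get $F(1)=1$), all of which check out.
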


We can further simplify the construction of $W$ by imposing $F$ to be absolutely continuous with \gls{pdf} $f$:
\begin{equation}
    \label{eq:williamson}
    W(x) = \int_x^1 \left( 1 - \frac{x}{r} \right) f(r) \ dr \,.
\end{equation}
The form~\eqref{eq:williamson} adds smoothness to $W$.
Differentiating~\eqref{eq:williamson} we get
\begin{multicols}{2}
    \noindent
    \begin{equation}
        \label{eq:williamson-first-derivative}
        W'(x) = -\int_x^1 \frac{f(r)}{r} \ dr \,,
    \end{equation}
    \columnbreak
    \begin{equation}
        \label{eq:williamson-second-derivative}
        W''(x) = \frac{f(x)}{x} \,.
    \end{equation}
\end{multicols}
\vspace*{-\multicolsep}
\noindent All in all, the $W$ function satisfies the equation
\begin{equation}
    \label{eq:W-survival}
    W(x) = \hat{F}(x) + x \ W'(x) \,,
\end{equation}
where $\hat{F}(x) = 1 - F(x) = \int_x^1 f$ is the survival function of $F$.
Equation~\eqref{eq:W-survival} is useful for computational purposes.
From~\eqref{eq:williamson-first-derivative}, it directly follows $W'(1^{-}) = 0$, thus $\mathcal{H}_1$ in~\eqref{eq:W-spectral-point-mass-1} is equal to zero.
This feature prevents \gls{sbevc} from reaching the independence copula, for which $W(x) = 1 - x$.
The value of $W'(0^{+})$ (and subsequently of $\mathcal{H}_0$) is, however, dependant on the behaviour of $f$ near zero.

\begin{example}
    \thlabel{ex:polynomial-pickands-F}
    Expanding on \thref{ex:polynomial-pickands-W}, by using~\eqref{eq:williamson-second-derivative}, we find that $F(x) = \sqrt{x}$.
    Hence, $F$ is the \gls{cdf} of $U^2$, where $U \sim \text{Unif}[0, 1]$.
\end{example}

\begin{example}
    \thlabel{ex:power-W-beta}
    Elaborating on \thref{ex:power-W-family}, if $\theta > 1$, by~\eqref{eq:williamson-second-derivative}, we get $f(x) = \theta (\theta - 1) x (1 - x)^{\theta - 2}$, which is the \gls{pdf} of the $\text{Beta}(\alpha = 2, \beta = \theta - 1)$ distribution.
\end{example}

\begin{example}
    \thlabel{ex:W-power-densities}
    \thref{ex:polynomial-pickands-F} is a special case of \glspl{wt} of positive\footnote{For $\theta \leq 0$, the resulting \gls{rv} is not bounded.} powers $U^{\theta}$ of the uniform distribution on $[0, 1]$.
    The general formulas for their densities and \glspl{cdf} are $f_{U^{\theta}}(x) = x^{1 / \theta - 1} / \theta$ and $F_{U^{\theta}}(x) = x^{1 / \theta}$, respectively, whereas their \glspl{wt} are given by
    \begin{equation}
        \label{eq:williamson-uniform-densities}
        W_{U^{\theta}}(x) =
        \left\{
            \begin{aligned}
                1 + \frac{1}{\theta - 1} x
                - \frac{\theta}{\theta - 1} x^{\frac{1}{\theta}} \,,
                    & \ \text{if} \ \theta \neq 1 \\
                1 - x + x \log x \,,
                    & \ \text{if} \ \theta = 1
            \end{aligned}
        \right.
        \,.
    \end{equation}
\end{example}

\subsubsection{Bayes space}

For modelling $f$, we will resort to the Bayes space, i.e., the Hilbert space $(\mathcal{B}^2, \oplus, \odot)$ of probability density functions of square-integrable logarithm~\cite{Egozcue2006, Machalova2020}.
The space $\mathcal{B}^2$ can be injected into $L^2([0, 1])$ employing the \gls{clr} transformation
\begin{equation}
    \label{eq:clr}
    \text{clr}[f](x) = \log f(x) - \int_0^1 \log f(y) \ dy \,.
\end{equation}
However, not every element in $L^2([0, 1])$ is attainable, since~\eqref{eq:clr} introduces the constraint $\int_0^1 \text{clr}[f] = 0$.
If we define the subspace $L_0^2([0, 1])$ of the functions with zero integral, then~\eqref{eq:clr} is a bijection from $\mathcal{B}^2$ to $L_0^2([0, 1])$ with inverse
\begin{equation}
    \label{eq:clr-inverse}
    \text{clr}^{-1}[p](x) = \frac{\exp p(x)}{\int_0^1 \exp p(y) \ dy} \,.
\end{equation}
What is more,~\eqref{eq:clr} is an isometry between $\mathcal{B}^2$ and $L_0^2([0, 1])$.

\begin{example}
    \thlabel{ex:clr-uniform-densities}
    The densities of positive powers $U^{\theta}$ of the uniform distribution in \thref{ex:W-power-densities} have \gls{clr} transforms $\text{clr}[U^{\theta}](x) = (1 - \theta)(1 + \log x) / \theta$.
    It immediately follows that all $U^{\theta}$ are linearly dependent.
\end{example}

Utilizing the isometry~\eqref{eq:clr-inverse}, we can search for a suitable function in $L_0^2([0, 1])$ and then transform it back to a \gls{pdf}.
However, this space is infinite-dimensional.
In practice, we shall work on a finite subspace.
In general, we will build a \gls{pdf} $f_{\bm{\theta}}$ as a linear combination
\begin{equation}
    \label{eq:bayes-model}
    f_{\bm{\theta}}(x) =
    \bigoplus_{i = 1}^{n} \ (\theta_i \odot \text{clr}^{-1}[\varphi_i])(x)
    =
    \text{clr}^{-1}
    \left[ \sum_{i = 1}^{n} \ \theta_i \ \varphi_i(x) \right]
    \,,
\end{equation}
where we can assume the $(\varphi_i)_{i = 1}^n$ are orthonormal, i.e., $\langle \varphi_i, \varphi_j \rangle_{L^2([0, 1])} = \delta_{ij}$, the Kronecker delta, and satisfy the zero-integral constraint.

The null element $f_{\bm{0}} \sim \text{Unif}[0, 1]$ produces the \gls{wt}~\eqref{eq:williamson-uniform-densities} for $\theta = 1$.
After rotation~\eqref{eq:rotation}, the resulting \gls{pf} has an explicit form that involves the Lambert W function~\cite{Corless1996}.
Should the parameters in~\eqref{eq:bayes-model} be normally distributed with zero mean vector, we expect the \gls{pf} to lie close to the graph in \figurename~\ref{fig:lambert-pickands}.
In this sense, \gls{sbevc} presents a very slight bias towards asymmetry.
This bias can be corrected by considering an affine subspace instead of a pure vector space, using a convenient $\omega \in L^2([0, 1])$ as a centre:
\begin{equation}
    \label{eq:affine-bayes-model}
    f_{\bm{\theta}} =
    \text{clr}^{-1}[\omega] \oplus
    \bigoplus_{i = 1}^{n} \ (\theta_i \odot \text{clr}^{-1}[\varphi_i])
    \,.
\end{equation}

\begin{figure}[ht]
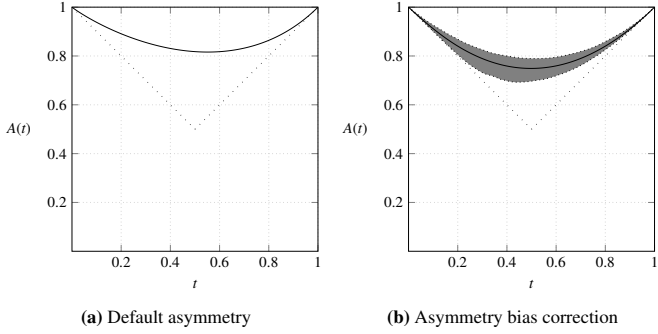

    \centering
    \begin{subfigure}{\figurewidth}
        \centering
        \resizebox{\columnwidth}{!}{
            \input{tikz/02-a-lambert.tex}
        }
        \caption{Default asymmetry}
        \label{fig:lambert-pickands}
    \end{subfigure}%
    \begin{subfigure}{\figurewidth}
        \centering
        \resizebox{\columnwidth}{!}{
            \input{tikz/02-b-bias-interval.tex}
        }
        \caption{Asymmetry bias correction}
        \label{fig:bias-interval}
    \end{subfigure}%
    \caption{
        On the left, \gls{pf} $A(t) = 1 - t + \exp \{ \mathcal{W}_{-1}(-2t / e^2) + 2 \}$ arising from the Williamson family~\eqref{eq:williamson-uniform-densities} for $\theta = 1$.
        Here, $\mathcal{W}_{-1}$ denotes the $k = -1$ branch of the complex Lambert W function.
        On the right, random \glspl{pf} built as perturbations around centre~\eqref{eq:orthogonal-projection}.
        All $\theta_i$ were sampled from a normal distribution with zero mean and $\sigma = 0.1$.
        We randomly drew a total of 1,000 \glspl{pf}.
        The solid line represents the mean function, while the grey envelope represents the confidence interval between quantiles 1\% and 99\%.
        The mean line is close to the $A(t) = t^2 - t + 1$ in \figurename~\ref{fig:pickands}.
    }
\end{figure}

\subsubsection{Compositional splines}

\citeauthor{Machalova2020} in~\cite{Machalova2020} formalize the construction of a compliant \gls{zbs} as a linear combination of the usual B-splines.
We shall approximate the \gls{clr} space with the \gls{zbs} subspace.
We refer the reader to~\cite{Machalova2020} for further details on how to compute \glspl{zbs} and~\cite{Boor2002} for more profound knowledge on B-splines, in general.

Splines are bounded functions.
Committing to them, we would definitely have $W'(0^{+}) = -\infty$ and thus $\mathcal{H}_0 = 0$ in~\eqref{eq:W-spectral-point-mass-0}.
Therefore, the resulting spectral measure would be absolutely continuous with respect to the Lebesgue measure on $[0, 1]$ with Radon-Nikodym derivative equal to~\eqref{eq:W-spectral-density}.

Given $0 = \kappa_0 < \dots < \kappa_{n + 1} = 1$, where $n \geq 0$, and assuming $2d$ additional coincidental\footnote{Coincidental knots at the interval endpoints convey maximum smoothness at each interior knot~\cite{Boor2002}. For splines of degree less than or equal to $d$, we have $(d - 1)$-continuous differentiability everywhere in $[0, 1]$.} knots $\kappa_{-d} = \dots = \kappa_{-1} = 0$ and $\kappa_{n + 2} = \dots = \kappa_{n + d + 1} = 1$ at the endpoints, the space of splines $p \in \mathcal{Z}_{\bm{\kappa}}^d$ of degree less than or equal to $d$ and $n + 2$ different knots $\bm{\kappa} = (\kappa_i)_{i = 0}^{n + 1}$ has dimension $n + d$.
The case $n = 0$ corresponds to zero-integral polynomials over $[0, 1]$.
Altogether, any \gls{zbs} can be expressed as
\begin{equation}
    \label{eq:zb-spline-model}
    p_{\bm{\theta}}(x) = \sum_{i = 1}^{n + d} \theta_i Z_i(x) \,,
    \ \text{for} \
    \bm{\theta} = (\theta_1, \dots, \theta_{n + d})
    \in \mathbb{R}^{n + d} \,,
\end{equation}
where $\int_0^1 Z_i = 0$ and we can further assume an orthonormal basis~\cite{Machalova2020}, i.e., $\langle Z_i, Z_j \rangle_{L^2} = \delta_{ij}$.

Furthermore, we can place a convenient centre for our \gls{zbs} space to correct the asymmetry bias.
We propose to take $\omega$ in~\eqref{eq:affine-bayes-model} to be the orthogonal projection $z$ of $-(1 + \log x) / 2$, the case $\theta = 2$ in~\thref{ex:clr-uniform-densities}, onto the space~\eqref{eq:zb-spline-model}:
\begin{equation}
    \label{eq:orthogonal-projection}
    z(x) =
    \sum_{i = 1}^{n + d}
    \left\langle \text{clr}[U^2], Z_i \right\rangle_{L^2} Z_i(x)
    \,.
\end{equation}
\figurename~\ref{fig:orthogonal-projection} shows that a spline can effectively approximate the logarithmic centre, despite the divergence near zero.
\figurename~\ref{fig:zb-spline-basis} depicts the underlying orthonormal \gls{zbs} basis $\{Z_i\}$.
\figurename~\ref{fig:bias-interval} shows the effectiveness of the bias correction.

\begin{figure}[ht]
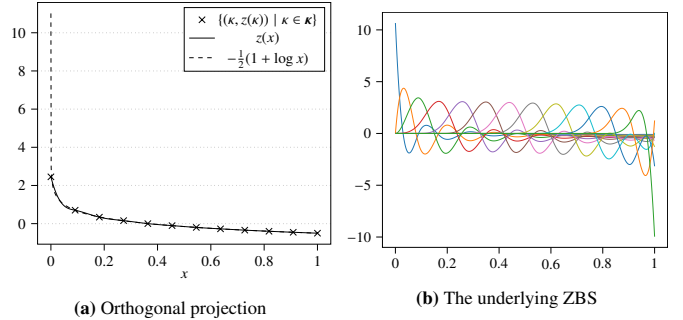

    \centering
    \begin{subfigure}{\figurewidth}
        \centering
        \resizebox{\columnwidth}{!}{
            \input{tikz/03-a-orthogonal-projection.tex}
        }
        \caption{Orthogonal projection}
        \label{fig:orthogonal-projection}
    \end{subfigure}%
    \begin{subfigure}{\figurewidth}
        \centering
        \resizebox{\columnwidth}{!}{
            \input{tikz/03-b-zbspline-basis.tex}
        }
        \caption{The underlying \gls{zbs}}
        \label{fig:zb-spline-basis}
    \end{subfigure}
    \caption{
        On the left, the projection $z(x)$ of $-(1 + \log x) / 2$ onto an orthonormal \gls{zbs} 13-dimensional basis with knots $\bm{\kappa}$.
        The logarithmic function diverges to infinity at zero.
        On the right, we have the underlying orthonormal cubic \gls{zbs} basis with 13 elements.
    }
\end{figure}

\subsection{Properties}

The following sections provide some insights on the relation between the core \gls{pdf} and the resulting \gls{evc}.

\subsubsection{Convergence}

We will present some results on how convergence on the Bayes space relates to convergence for the resulting \glspl{evc} through \gls{sbevc}.
We shall use the supremum norm $\lVert f \rVert_{\infty} = \sup_{x \in \mathcal{X}} \ \lvert f(x) \rvert$ of a bounded function $f: \mathcal{X} \rightarrow \mathbb{R}$ to measure distances between some objects.
$\mathcal{X}$ will typically be a compact subset of $\mathbb{R}^n$, namely $[0, 1]$, for functions $W$ and $A$, and $[0, 1]^2$, for copulas $C$.
The supremum norm defines a distance $d_{\infty}(f, g) = \lVert f - g \rVert_{\infty}$.
A sequence of functions $\{f_n\}_{n = 1}^\infty$ converging on the latter distance to some $f$ is said to converge \textit{uniformly}.
Sometimes, however, uniform convergence is a too strong property.
For instance, the function $f$ may not be bounded on the whole $\mathcal{X}$.
Another convergence exists in those cases, only requiring the sequence converging uniformly to $f$ on every compact subset $\mathcal{K} \subset \mathcal{X}$.
Then, the sequence of $\{f_n\}_{n = 1}^\infty$ is said to be \textit{compactly convergent} to $f$.
The following related concept applies to probability measures.

\begin{definition}
    \thlabel{def:total-variation}
    Let $\mathbb{P}$ and $\mathbb{Q}$ be probability measures on the space $[0, 1]$ equipped with the Borel $\sigma$-algebra $\mathcal{B}$.
    The \textit{\gls{tvd}} between $\mathbb{P}$ and $\mathbb{Q}$ is defined as
    \begin{equation*}
        d_{\text{TV}}(\mathbb{P}, \mathbb{Q}) =
        \sup_{B \in \mathcal{B}} \ \lvert \mathbb{P}(B) - \mathbb{Q}(B) \rvert
        \,.
    \end{equation*}
\end{definition}

TVD satisfies all three axioms of a proper metric.
By Scheff\'e's theorem~\cite{Tsybakov2009}, it can also be expressed in terms of the \glspl{pdf} $f$ and $g$ of $\mathbb{P}$ and $\mathbb{Q}$, respectively, as
\begin{equation}
    \label{eq:scheffe-theorem}
    d_{\text{TV}}(\mathbb{P}, \mathbb{Q}) \equiv
    d_{\text{TV}}(f, g) =
    \frac{1}{2} \int_0^1 \lvert f(x) - g(x) \rvert \ dx
    \,.
\end{equation}

Our first result links convergence in \gls{tvd} of a sequence of \glspl{pdf} with convergence of the corresponding sequence of \glspl{wt} and their derivatives.

\begin{proposition}
    \thlabel{prop:fn-converge-imply-wn-converge}
    Let $\{f_n\}_{n = 1}^\infty$ be a sequence of \glspl{pdf} supported on $[0, 1]$ such that $\lim\limits_{n \rightarrow \infty} d_{TV}(f, f_n) = 0$ for some \gls{pdf} $f$ also on $[0, 1]$.
    Let $W_n$ and $W$ be the corresponding Williamson transforms of $f_n$ and $f$, respectively.
    Then, the sequence $\{W_n\}_{n = 1}^\infty$ uniformly converges to $W$ on $[0, 1]$.
    Moreover, $\{W_n'\}_{n = 1}^\infty$ compactly converges to $W'$ on $(0, 1]$.
\end{proposition}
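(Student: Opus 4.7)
The plan is to bound the differences $|W_n(x)-W(x)|$ and $|W_n'(x)-W'(x)|$ pointwise by an integral expression that collapses (via Scheffé's theorem~\eqref{eq:scheffe-theorem}) to a multiple of $d_{\text{TV}}(f_n,f)$, uniformly in $x$ over the appropriate set.

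For the first statement, I would subtract the defining integrals of $W_n$ and $W$ from~\eqref{eq:williamson} to obtain
\begin{equation*}
    W_n(x) - W(x) = \int_x^1 \left(1 - \frac{x}{r}\right)(f_n(r) - f(r))\,dr.
\end{equation*}
Since $0 \le 1 - x/r \le 1$ whenever $x \le r \le 1$, taking absolute values inside the integral and extending the domain to $[0,1]$ yields
\begin{equation*}
    \lVert W_n - W \rVert_\infty \le \int_0^1 |f_n(r)-f(r)|\,dr = 2\,d_{\text{TV}}(f_n,f),
\end{equation*}
which tends to $0$ by hypothesis. This gives uniform convergence on the whole of $[0,1]$.

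For the second statement, I would use~\eqref{eq:williamson-first-derivative} to write
\begin{equation*}
    W_n'(x) - W'(x) = -\int_x^1 \frac{f_n(r)-f(r)}{r}\,dr.
\end{equation*}
Here the factor $1/r$ is no longer globally bounded, which is precisely why only compact convergence on $(0,1]$ can be expected; this is the step where care is required. Fix a compact $K \subset (0,1]$ and let $x_0 = \min K > 0$. For $x \in K$ the integrand is supported on $[x,1] \subset [x_0,1]$, so $1/r \le 1/x_0$ there, giving
\begin{equation*}
    \sup_{x \in K} |W_n'(x) - W'(x)| \le \frac{1}{x_0}\int_{x_0}^1 |f_n(r)-f(r)|\,dr \le \frac{2}{x_0}\,d_{\text{TV}}(f_n,f) \xrightarrow[n \to \infty]{} 0.
\end{equation*}
Since $K$ was arbitrary, $\{W_n'\}$ converges compactly to $W'$ on $(0,1]$. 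The hard part is really just recognising that the unboundedness of $1/r$ at $0$ forces the weaker compact-convergence statement for the derivatives, whereas the kernel $1-x/r$ stays bounded and therefore allows genuinely uniform convergence of the $W_n$ themselves.
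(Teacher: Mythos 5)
Your proof is correct and follows essentially the same route as the paper's: bound the difference of the Williamson integrals by the $L^1$ distance of the densities using the boundedness of the kernel $1-x/r$ on $[x,1]$, invoke Scheff\'e's theorem, and for the derivatives restrict to a compact subset of $(0,1]$ to control the factor $1/r$ by $1/x_0$. Nothing further is needed.
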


The next step links the uniform convergence of \glspl{wt} with that of \glspl{pf}.
Uniform convergence of pointwise-convergent sequences of \glspl{pf} can be established by other means~\cite{FilsVilletard2008}.
Nonetheless, the following result also states the uniform convergence of the first derivatives of \glspl{pf} under the same hypotheses, which cannot be taken for granted.

\begin{proposition}
    \thlabel{prop:graph-convergence}
    Let $\{W_n\}_{n = 1}^\infty$ be a sequence of \glspl{wt}, arising from \glspl{pdf}, uniformly convergent to some other \gls{wt} $W$ on $[0, 1]$.
    Also, suppose the sequence of first derivatives $\{W_n'\}_{n = 1}^\infty$ of the previous functions compactly converge to $W'$ on $(0, 1]$.
    Let $A_n$ and $A$ be the corresponding \glspl{pf} of $W_n$ and $W$, respectively, according to \gls{sbevc}.
    Then, the sequence $\{A_n\}_{n = 1}^{\infty}$ uniformly converges to $A$ on $[0, 1]$, while $\{A_n'\}_{n = 1}^{\infty}$ compactly converges to $A'$ on $(0, 1]$.
\end{proposition}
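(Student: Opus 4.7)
The plan is to parametrize both $A_n$ and $A$ through the $x$-variable from the rotation \eqref{eq:rotation}, reducing the problem to studying the automorphisms $t_n(x) = \tfrac{1}{2}(1 + x - W_n(x))$ of $[0,1]$, their inverses $x_n := t_n^{-1}$, and the analogous objects attached to $W$.

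To obtain uniform convergence of $A_n$ to $A$ on $[0,1]$, I would first note that $W_n \to W$ uniformly immediately yields $t_n \to t$ uniformly. Because $W_n$ and $W$ are non-increasing, both $t_n'$ and $t'$ are bounded below by $1/2$, so their inverses are $2$-Lipschitz. A single mean value estimate applied to $t_n(x_n(t)) = t(x(t))$ then gives $|x_n(t) - x(t)| \leq \|W_n - W\|_\infty$, uniformly in $t$. Inserting this into $A_n(t) = \tfrac{1}{2}(1 + x_n(t) + W_n(x_n(t)))$, and splitting $W_n(x_n(t)) - W(x(t))$ into two pieces, one bounded by $\|W_n - W\|_\infty$ and the other by $\omega_W(\|W_n - W\|_\infty)$ with $\omega_W$ the modulus of continuity of $W$ on the compact $[0,1]$, closes the first half.

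For the compact convergence of $A_n'$ on $(0,1]$, I would use \eqref{eq:pickands-first-derivative} to write $A_n'(t) = g(W_n'(x_n(t)))$ with $g(u) = (1+u)/(1-u)$. Given a compact $[t_0, 1] \subset (0,1]$, its preimage under $t$ is a compact $[x_0, 1] \subset (0,1]$, since $t(0) = 0$; by the uniform convergence $x_n \to x$ obtained above, for $n$ large we have $x_n([t_0, 1]) \subset [x_0/2, 1] =: K'$. On $K'$, the hypothesis provides $W_n' \to W'$ uniformly, while $W'$ itself is continuous there because $W$ arises from a pdf $f$ and the representation $W'(x) = -\int_x^1 f(r)/r\, dr$ shows continuity on $(0,1]$. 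A triangle inequality then gives $W_n'(x_n(t)) \to W'(x(t))$ uniformly on $[t_0, 1]$, and composing with $g$, which is Lipschitz on any bounded subset of $(-\infty, 1)$, finishes the argument.

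The main obstacle will be the bookkeeping between the $t$- and $x$-parametrizations: the compact convergence hypothesis lives in $x$-space while the conclusion about $A_n'$ lives in $t$-space, so the perturbed preimages $x_n(K)$ must be shown to remain inside a fixed compact of $(0,1]$ for large $n$. Beyond this change of variables, the estimates reduce to elementary mean value and continuity arguments.
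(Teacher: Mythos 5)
Your argument is correct, but it takes a genuinely different route from the paper's. The paper proves both halves topologically: it invokes the equivalence between uniform convergence and convergence of function graphs for maps with compact domain and range, observes that the graphs of $A_n$ and $A$ are images of the graphs of $W_n$ and $W$ under the affine map~\eqref{eq:rotation} (and, for the derivatives, under the continuous maps $\mathbb{T}$ and $\mathbb{X}$), and lets continuity of these maps transport graph convergence from the $W$-side to the $A$-side. You instead work directly with the reparametrization $t_n(x)=\tfrac12\left(1+x-W_n(x)\right)$ and exploit the quantitative fact that $t_n'\geq 1/2$ (equivalently, $t_n(x_2)-t_n(x_1)\geq\tfrac12(x_2-x_1)$, which needs no differentiability), so that the inverse automorphisms are uniformly $2$-Lipschitz and you get the explicit bound $\lvert x_n(t)-x(t)\rvert\leq\lVert W_n-W\rVert_\infty$ and hence an explicit modulus for $\lVert A_n-A\rVert_\infty$. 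Your route is more elementary and self-contained --- it does not lean on the cited graph-convergence equivalence --- it yields rates, and it cleanly handles the one point where the paper's argument is thinnest, namely why $w\mapsto(1+w)/(1-w)$ behaves well despite its singularity at $w=1$: the relevant values of $W_n'$ and $W'$ all lie in $(-\infty,0]$, where that map is $2$-Lipschitz. Two small polish items: ``Lipschitz on any bounded subset of $(-\infty,1)$'' should read ``on any subset bounded away from $1$'' (e.g.\ $(-\infty,0]$, which is all you need since the derivatives are nonpositive); and the continuity of $W'$ on compacta of $(0,1]$ also follows for free as a uniform limit of the continuous $W_n'$, should one prefer not to assume that the limit $W$ itself arises from a pdf.
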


\begin{remark}
    Uniform convergence of function derivatives is mainly unconnected to uniform convergence of the functions themselves.
    The reason why it works, in this case, comes down to Williamson's transform, whose first derivative is also in a convenient integral form that allows applying \gls{tvd} convergence of the internal \glspl{pdf} to both the function and its derivative simultaneously.
\end{remark}

Some topological arguments allow establishing uniform convergence of copulas from pointwise convergence alone~\cite{Schmitz2003}.
Notwithstanding, there exists a connection between the supremum norms of copulas and those of their respective \glspl{pf}~\cite{Kamnitui2019}.
Namely, $\lVert C_1 - C_2 \rVert_{\infty} \leq 2\gamma / (1 + 2\gamma)^{1 + 1/(2\gamma)}$, where $\gamma = \lVert A_1 - A_2 \rVert_{\infty}$.

We turn next to some assumptions making convergence in $L_0^2([0, 1])$ sufficient for \glspl{pdf} to converge in \gls{tvd}.

\begin{proposition}
    \thlabel{prop:convergence-tvd-pdfs}
    Let $\{p_n\}_{n = 1}^\infty \subset L_0^2([0, 1])$ continuous and uniformly bounded, i.e., $\lVert p_n \rVert_{\infty} \leq K$ for some $K > 0$ and for all $n$.
    Suppose $\lim\limits_{n \rightarrow \infty} \lVert p - p_n \rVert_2 = 0$, for some $p \in L_0^2([0, 1])$.
    Let $f_n = \mathrm{clr}^{-1}[p_n]$ and $f = \mathrm{clr}^{-1}[p]$.
    Then, $\lim\limits_{n \rightarrow \infty} d_{TV}(f, f_n) = 0$.
\end{proposition}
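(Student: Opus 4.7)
The plan is to upgrade the $L^2$-convergence of the log-densities $p_n$ to $L^1$-convergence of the pdfs $f_n$ themselves, exploiting the uniform bound $\lVert p_n \rVert_\infty \leq K$ to make the exponential map Lipschitz on a common range. Once $\lVert f_n - f \rVert_1 \to 0$ is established, Scheff\'e's theorem~\eqref{eq:scheffe-theorem} immediately yields $d_{\mathrm{TV}}(f, f_n) \to 0$.

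First, I would transfer the uniform bound to the limit $p$. Since $p_n \to p$ in $L^2([0,1])$, some subsequence $p_{n_k}$ converges to $p$ almost everywhere; passing to the a.e.\ limit in $\lvert p_{n_k} \rvert \leq K$ forces $\lvert p \rvert \leq K$ a.e. On the set where both $\lvert p \rvert \leq K$ and $\lvert p_n \rvert \leq K$, the mean value theorem applied to $e^x$ on $[-K, K]$ gives $\lvert e^a - e^b \rvert \leq e^K \lvert a - b \rvert$, so pointwise $\lvert e^{p_n} - e^{p} \rvert \leq e^K \lvert p_n - p \rvert$. Integrating and applying Cauchy-Schwarz on $[0, 1]$ (where the constant function $1$ has $L^2$-norm $1$) yields
\[
\lVert e^{p_n} - e^{p} \rVert_1 \;\leq\; e^K \lVert p_n - p \rVert_2 \;\longrightarrow\; 0.
\]
The normalising constants $Z_n = \int_0^1 e^{p_n}$ and $Z = \int_0^1 e^{p}$ both lie in $[e^{-K}, e^K]$, so they are bounded away from zero, and $\lvert Z_n - Z \rvert \leq \lVert e^{p_n} - e^{p} \rVert_1 \to 0$ by the triangle inequality.

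Finally, I would decompose
\[
f_n - f \;=\; \frac{e^{p_n} - e^{p}}{Z_n} \;+\; \frac{Z - Z_n}{Z_n Z}\, e^{p},
\]
take $L^1$-norms, and combine the two previous estimates with the lower bound $Z_n \geq e^{-K}$ to conclude $\lVert f_n - f \rVert_1 \to 0$. The only genuinely delicate point is securing the a.e.\ bound $\lvert p \rvert \leq K$ for the limit via the subsequence trick; everything afterwards is routine. The hypothesis $\lVert p_n \rVert_\infty \leq K$ is essential here, as without a two-sided pointwise bound the exponential map is not Lipschitz and the passage from $L^2$-convergence in $L_0^2([0,1])$ to $L^1$-convergence in $\mathcal{P}$ through $\mathrm{clr}^{-1}$ cannot be controlled by a single constant.
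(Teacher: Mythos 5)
Your proof is correct and follows essentially the same route as the paper's: bound the normalising integrals in $[e^{-K}, e^K]$, use the mean value theorem to make $x \mapsto e^x$ Lipschitz with constant $e^K$ on the common range, decompose $f_n - f$ into a numerator term and a normalising-constant term, and pass from $L^1$ to $L^2$ control via Cauchy--Schwarz (the paper uses Jensen's inequality to the same effect) before invoking Scheff\'e's theorem. Your subsequence argument for transferring the bound $\lvert p \rvert \leq K$ to the limit is in fact a cleaner justification than the paper's brief appeal to continuity.
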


\gls{sbevc}, acting on convergent sequences in $L_0^2([0, 1])$, produces \glspl{evc} that not only uniformly converge but also whose partial derivatives do.
Since copula partial derivatives correspond to conditional \glspl{cdf}, they are paramount in copula sampling algorithms~\cite{Bouye2000}.
\thref{cor:convergence-partial-derivatives} guarantees that the resulting samples tend to fit the copula uniformly over the support.

\begin{corollary}
    \thlabel{cor:convergence-partial-derivatives}
    Let $\{z_n\}_{n = 1}^\infty \subset L_0^2([0, 1])$ be a sequence of uniformly bounded smooth cubic \glspl{zbs} that converge in the $\lVert \cdot \rVert_2$ norm.
    The corresponding $\mathrm{\glspl{evc}}$ from \gls{sbevc} $\{C_n\}_{n = 1}^{\infty}$ uniformly converge to some $\mathrm{\gls{evc}}$ $C$, satisfying $\partial_i C_n \xrightarrow[n \rightarrow \infty]{\lVert \cdot \rVert_{\infty}} \partial_i C$ compactly over $(0, 1]^2$, for $i = 1, 2$.
\end{corollary}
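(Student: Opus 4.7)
The plan is to chain the three propositions above and then translate function convergence into partial-derivative convergence via the explicit formulas~\eqref{eq:evc-partial-derivatives}. First, since $\{z_n\}$ is uniformly bounded in $L^\infty$ and converges in $L^2$, the preceding proposition applies and gives $d_{\mathrm{TV}}(f_n, f) \to 0$ for the associated densities $f_n = \mathrm{clr}^{-1}[z_n]$ and $f = \mathrm{clr}^{-1}[z]$. By~\thref{prop:fn-converge-imply-wn-converge}, this implies $W_n \to W$ uniformly on $[0, 1]$ and $W_n' \to W'$ compactly on $(0, 1]$, and the following proposition then yields $A_n \to A$ uniformly on $[0, 1]$ together with $A_n' \to A'$ compactly on $(0, 1]$. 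Because the Pickands constraints (support bounds, endpoint values, convexity) are preserved under uniform convergence, $A$ is itself a Pickands function and the copula $C$ it defines via~\eqref{eq:ev-copula} is an EVC. Uniform convergence $C_n \to C$ on $[0, 1]^2$ then follows from the bound $\lVert C_n - C \rVert_\infty \leq 2 \gamma_n / (1 + 2 \gamma_n)^{1 + 1/(2 \gamma_n)}$ with $\gamma_n = \lVert A_n - A \rVert_\infty \to 0$ recalled at the end of Section~\ref{subsec:convergence}.

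For the partial derivatives, I fix a compact set $K \subset (0, 1]^2$ and work with
\begin{equation*}
    \partial_u C_n(u, v) = \frac{C_n(u, v)}{u} \bigl[A_n(t) + (1 - t) A_n'(t)\bigr], \qquad t(u, v) = \frac{\log u}{\log(uv)}.
\end{equation*}
On $K$ the variable $u$ is bounded away from $0$, so $C_n / u \to C / u$ uniformly; it only remains to control the bracket as a function of $(u, v)$. I split $K = K_\delta \cup K_\delta^{c}$, where $K_\delta$ excludes a $\delta$-neighbourhood of the lines $\{u = 1\} \cup \{v = 1\}$: on $K_\delta$ the auxiliary variable $t$ lies in a compact subset of $(0, 1)$, so both $A_n \to A$ and $A_n' \to A'$ hold uniformly there and the bracket converges uniformly. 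On the complement I appeal to uniform boundedness: every Pickands function satisfies $A_n \in [1/2, 1]$ and, by convexity and the endpoint conditions, $A_n' \in [-1, 1]$, so the bracket is uniformly bounded; letting $\delta \to 0$ and invoking continuity of the limit closes this part. The case of $\partial_v C_n$ is symmetric under $t \leftrightarrow 1 - t$ and $u \leftrightarrow v$.

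The main obstacle lies in the complement $K_\delta^{c}$, where $t$ degenerates to $0$ or $1$ while the compact convergence of $A_n'$ on $(0, 1]$ does not reach $t = 0$. The resolution I envisage exploits the structure of our density-based construction: since each $f_n$ is bounded below by a positive constant, $W_n'(0^+) = -\infty$ and $W_n'(1^-) = 0$; combined with~\eqref{eq:pickands-first-derivative}, this forces $A_n'(0^+) = -1$ and $A_n'(1^-) = 1$ independently of $n$, values that agree with those of $A'$ at the endpoints. Together with uniform convergence $A_n \to A$ and the monotonicity of each $A_n'$, this produces the equicontinuity in $n$ needed near $0$ and $1$ to pass to the limit in the bracket and conclude a standard $\varepsilon / 3$ argument.
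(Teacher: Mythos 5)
Your chain of reductions is exactly the one the paper intends: $L^2$-convergence of the uniformly bounded splines gives total-variation convergence of the densities, hence uniform convergence of the $W_n$ and compact convergence of the $W_n'$ by \thref{prop:fn-converge-imply-wn-converge}, hence the same for $A_n$ and $A_n'$, and finally uniform convergence of the copulas via the $\gamma$-bound from~\cite{Kamnitui2019}. Where you genuinely add something is in the treatment of~\eqref{eq:evc-partial-derivatives}: the paper's proof is the single sentence that ``all the terms in the right-hand side uniformly converge on compact sets,'' which silently skips over the fact that on a compact $K \subset (0,1]^2$ the auxiliary variable $t = \log u / \log(uv)$ sweeps through all of $[0,1]$ (it hits $0$ on $\{u=1\}$ and $1$ on $\{v=1\}$), while the compact convergence of $A_n'$ only covers $[t_0, 1]$ for $t_0 > 0$. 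You correctly identify this as the real obstacle, and your proposed fix is the right one: every $A_n$ produced by the construction has $A_n'(0^+) = -1$ (because $f_n$ is bounded below, so $W_n'(0^+) = -\infty$), the limit $A$ shares this value, and convexity of the $A_n$ together with $\lVert A_n - A \rVert_\infty \to 0$ pins $A_n'(t)$ between $-1$ and a difference quotient of $A_n$ that is uniformly close to $-1$ for small $t$, yielding the uniform control of the bracket near $t = 0$. Two small caveats: the intermediate claim that on $K_\delta^c$ ``uniform boundedness \ldots and continuity of the limit closes this part'' is not an argument on its own (boundedness does not give convergence, and $\partial_u C$ is in fact discontinuous at $(1,1)$); it is only your final endpoint-slope argument that closes the gap, and it should be stated as bounding $\sup_{t\in[0,1]} \lvert A_n(t) - A(t) + (1-t)(A_n'(t) - A'(t))\rvert$ directly rather than via continuity of the limit. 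With that reading, your proof is correct and strictly more complete than the paper's.
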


\subsubsection{Dependence}

In the case of an \gls{evc}, Kendall's tau and Spearman's rho take the form of integrals involving the \gls{pf}~\cite{Kamnitui2019}.
The substitution of the \gls{pf} by the equivalent \gls{wt} form and a change of variables afterwards do not provide any meaningful insight into the latter's role.
However, the apparent relation between \glspl{wt} and Lorenz curves reveals a new path for measuring association.

The \gls{wt} $W$ of a \gls{pf} $A$ satisfies the definition of a Lorenz~\cite{Fontanari2020} curve $L$ after the change of variable $L(x) = W(1 - x)$, for $x \in [0, 1]$.
Lorenz curves appear in econometrics for assessing wealth inequality through the \gls{gc} $G = 1 - 2 \int_0^1 L$.
The $G$ index has a geometrical interpretation as the area between $L$ and $x \mapsto x$ divided by the area under $x \mapsto x$, which is equal to $1/2$.
The same interpretation applies to $W$ and, since $\int_0^1 W = \int_0^1 L$, we have $G = 1 - 2 \int_0^1 W$.
A value of $G = 0$ means wealth is uniformly distributed (the $p\%$ wealthiest proportion of the population accumulate $p\%$ of the total wealth, for all $p \in [0, 1]$), whereas $G \lesssim 1$ means that nearly all wealth belongs to a tiny fraction of the population.
In summary, $G = 0$ represents perfect equality, while $G = 1$ represents perfect inequality.

In our context, a \gls{pf} $A$ is the affine transformation of some $W$.
Since affine transformations change areas by applying a constant factor, the latter cancels out in ratio measures, leaving them invariant.
Therefore, the \gls{gc} is the area between $A$ and the upper bound line $\{y = 1\}$ divided by the area between the support lines and the upper bound line.
This argument leads to $G = 4 \left( 1 - \int_0^1 A \right)$.
The value $G = 1$ happens when $A$ is identically equal to the lower support lines ($\int_0^1 A = 3/4$), producing the comonotonic copula.
On the other hand, $G = 0$ occurs when $A$ is identical to the upper bound line ($\int_0^1 A = 1$), producing the independence copula.
This way, the bivariate positive association can be interpreted in econometric terms: comonotonicity is equivalent to perfect inequality, whereas independence corresponds to perfect equality.

The \gls{gc} is an uncommon association measure in the context of \glspl{evc}, despite its simplicity.
We have only found a brief mention of it in~\cite{Guzmics2020}, where the measure was not scaled to lie on $[0, 1]$.
Moreover, \gls{pf} symmetry was further assumed.
The index can be reformulated for any \textit{positively quadrant dependent} copula $C$, i.e., $C(u, v) \geq uv$, for all $u, v \in [0, 1]^2$, as\footnote{The \gls{cdf}~\eqref{eq:h-cdf} and its stochastic interpretation provide a shortcut to check this.}
\begin{equation*}
    G = 4 \left(
        1 - \int_{[0, 1]^2} \frac{\log C(u, v)}{\log uv} \ dudv
    \right)
    \,.
\end{equation*}

The \gls{gc}, as defined above, satisfies the axioms of a \textit{dependence measure}~\cite{Bouye2000}.
The \gls{gc} takes values on $[0, 1]$, unlike Kendall's tau and Spearman's rho, which belong to a more general family of \textit{concordance measures}, ranging in $[-1, 1]$ and allowing for negative association.

\begin{remark}
    Interestingly, after integrating by parts twice, the area of $W$ can be expressed in terms of the inner \gls{pdf} $f$, yielding $G = 1 - \mathbb{E}[X], \ \text{where} \ X \sim f$.
    This means that every $\mathrm{\gls{gc}}$ in $(0, 1)$ is attainable through $\mathrm{\gls{sbevc}}$.
\end{remark}
\subsection{Estimation}

\label{subsec:estimation}

The estimation process builds upon the various constructions explored above.
Given an orthonormal \gls{zbs} basis, we aim to find the parameter vector $\bm{\theta}$ that best fits a dataset.
Knowing the one-to-one relation~\cite{Caperaa1997} between the random vector $(U, V)$ following an \gls{evc} and the \gls{rv} $Z = \log U / \log (UV)$, we reduce our problem to fitting the latter, which has a more straightforward \gls{pdf}, derived from~\eqref{eq:h-cdf}:
\begin{equation}
    \label{eq:h-pdf}
    h(z) = 1 + (1 -2z) \frac{A'(z)}{A(z)} + z(1 - z)
    \left[ \frac{A''(z)}{A(z)} - \left( \frac{A'(z)}{A(z)} \right)^2 \right]
    \,.
\end{equation}

Given a random sample $\mathcal{D} = \{ z_i \}_{i = 1}^m$ from $Z$ and a model $h_{\bm{\theta}}$ derived from $p_{\bm{\theta}}$ up to $A_{\bm{\theta}}$, the frequentist approach to the estimation addresses the maximization of the \gls{pll} of $h_{\bm{\theta}}$
\begin{equation}
    \label{eq:loss-function}
    \ell(\bm{\theta} \vert \mathcal{D}) =
    \sum_{i = 1}^{m} \log h_{\bm{\theta}}(z_i)
    - \lambda \int_0^1 (p_{\bm{\theta}}''(x))^2 \ dx
    \, \ ,
\end{equation}
for some regularization hyper-parameter $\lambda \geq 0$.
The square norm term involving $p_{\bm{\theta}}''$ is the linearized curvature of the spline: a simplified non-intrinsic form of the curvature that can be expressed as a covariant tensor ${\bm{\theta}}^{\top} \Omega \bm{\theta}$, where $\Omega = (\Omega_{ij}) = (\int_0^1 Z_i'' Z_j'')$.
Splines may exhibit complex shapes prone to overfitting, as shown in \figurename~\eqref{fig:zb-spline-basis}.
Penalizing the curvature is the proposed method in~\cite{Machalova2020} in the context of compositional data regression.
\citeauthor{HernandezLobato2011} applied this approach in semiparametric copula models before~\cite{HernandezLobato2011}.
Taking $\lambda = 0$ removes regularization, retrieving the usual log-likelihood.

Estimating the parameters of such a model poses some challenges.
Evaluating the resulting \gls{pf} from a parameter vector and a single argument implies several non-trivial operations, most notably the integral and affine transformations~\eqref{eq:williamson} and~\eqref{eq:rotation}.
These steps can be applied with near-perfect accuracy, with proper algorithms and without time constraints.
However, in an iterative optimization, time is scarce.
Therefore, we propose critical approximations at each step that trade some accuracy off for processing speed without compromising the overall stability.
The effectiveness of our proposal will be thoroughly tested in a \gls{ss} in Section~\ref{sec:results}.

First, note that the evaluation of $A$ in~\eqref{eq:rotation} at a specific value $t_0$ requires solving for $x$ in $t(x) = t_0$.
The latter will generally be a nonlinear equation that can only be solved through numerical methods at a relatively high computational cost.
Hence, in most cases, evaluating~\eqref{eq:h-pdf} in~\eqref{eq:loss-function} at each point $z_i$ becomes rapidly unaffordable as $m$ increases.
Moreover, any root finding procedure would prevent us from applying gradient optimization, stopping backpropagation.
We propose $h_{\bm{\theta}}$ be approximated by a piecewise linear interpolator $\tilde{h}$ with sufficiently numerous and carefully selected knots.

Since~\eqref{eq:loss-function} is based on an empirical univariate sample $\mathcal{D}$, a good knot selection utilizes uniform quantiles of $\mathcal{D}$.
This way, the knots will be more spaced on low probability regions and accumulate on high probability ones.
This criterion, which was employed in a similar setting in~\cite{HernandezLobato2011}, reduces the variance of the parameter vector $\bm{\theta}$.
Once fixed the quantiles $\{q_i\}_{i = 1}^k$, we need to estimate some $\{x_i\}_{i = 1}^k$ such that $t_i \equiv t_{\bm{\theta}}(x_i) \approx q_i$ and then take $h_i = h_{\bm{\theta}}(t_i)$ as the linear interpolator value at knot $t_i$.
Note that the $t_i$'s are approximations for the $q_i$'s.
To estimate the required $x_i$'s, we may apply~\eqref{eq:rotation-inverse} over the $q_i$'s grid using an empirical nonparametric estimate of the \gls{pf}, like~\eqref{eq:pickands-from-h}.
We can state the procedure as follows.

\begin{algorithm}[Selection of an interpolation grid for $h_{\bm{\theta}}$]
    \thlabel{alg:empirical-w-grid}
    Let $\mathcal{D} = \{z_i\}_{i = 1}^m$ be a random sample following the $H$ distribution.
    To build an interpolation grid $\{x_i\}_{i = 0}^{k + 1}$ in the $W$ space such that $\{t(x_i)\}_{i = 0}^{k + 1}$ are roughly distributed according to $\mathcal{D}$, follow these steps:
    \begin{enumerate}[noitemsep]
        \item Pick $k$ uniform quantiles $0 = q_0 < \dots < q_{k + 1} = 1$ of $\mathcal{D}$.
        \item Build the empirical \gls{cdf} $\tilde{H}$ of $\mathcal{D}$.
        \item Build an empirical estimate $\tilde{A}$ using $\tilde{H}$ and~\eqref{eq:pickands-from-h}.
        \item Ensure boundary constraints taking
        \begin{equation*}
            \hat{A}(t) = \min \{1, \max \{ t, 1 - t, \tilde{A}(t)\} \}
            \,.
        \end{equation*}
        \item Set $x_0 = 0$ and $x_{k + 1} = 1$.
        Then, for every $i \in \{ 1, \dots, k \}$, set $x_i = q_i + \hat{A}(q_i) - 1$.
        \item Sort ascendingly the resulting $\{x_i\}_{i = 0}^{k + 1}$ and remove duplicates if needed.
    \end{enumerate}
\end{algorithm}

We can reuse the grid obtained in the last algorithm throughout the estimation process, at every gradient descent step and with different values for the parameter vector $\bm{\theta}$.
With this grid and a parameter vector $\bm{\theta}$, we can now build a light version of $h_{\bm{\theta}}$ to evaluate the \gls{pll}.
Early experiments suggest that selecting spline knots for $p_{\bm{\theta}}$ according to~\thref{alg:empirical-w-grid} is key to constructing an unbiased estimator $A_{\bm{\theta}}$.

Along with the interpolation grid, we need to estimate the values of $W_{\bm{\theta}}$ and its derivatives from $p_{\bm{\theta}}$.

\begin{algorithm}[Approximation of the \gls{wt} and its derivatives]
    \thlabel{alg:estimate-W}
    Let $p_{\bm{\theta}}$ be the \gls{zbs} corresponding to the parameter vector $\bm{\theta}$.
    Let $\{r_i\}_{i = 0}^{n + 1}$ be an strictly increasing real sequence such that $r_0 = 0$ and $r_{n + 1} = 1$.
    Let $\epsilon \gtrsim 0$ such that $\epsilon < r_1$.
    To build an approximation to the corresponding \gls{wt} $W_{\bm{\theta}}$ and its first and second derivatives, follow these steps:

    \begin{enumerate}[noitemsep]
        \item For $i \in \{ 0, \dots, n + 1 \}$, set $p_i = \exp p_{\bm{\theta}}(r_i)$.
        \item Compute $I \approx \int_0^1 \exp p_{\bm{\theta}}$ using the composite trapezoidal rule over $\{(r_i, p_i)\}_{i = 0}^{n + 1}$.
        \item For $i \in \{ 0, \dots, n + 1 \}$, set $f_i = p_i / I$.
        \item Set $s_0 = \epsilon$.
        Then, for $i \in \{ 1, \dots, n + 1 \}$, set $s_i = r_i$.
        \item For $i \in \{ 0, \dots, n + 1 \}$, set $\bar{W}_i'' = f_i / s_i$.
        \item For $i \in \{ 0, \dots, n \}$, set $\Delta_i = s_{i + 1} - s_i$.
        \item For $i \in \{ 0, \dots, n \}$, set $P_i = \Delta_i \ (f_i + f_{i + 1}) / 2$ and $Q_i = \Delta_i \ (\bar{W}_i'' + \bar{W}_{i + 1}'') / 2$.
        \item Set $\bar{W}_{n + 1}' = 0$.
        Then, for $i$ from $n$ down to $0$, compute $\bar{W}_i'$ using the recurrence relation
        \begin{equation}
            \label{eq:recurrence-W-derivative}
            \bar{W}_i' = \bar{W}_{i + 1}' - Q_i \,.
        \end{equation}
        \item For $i \in \{ 0, \dots, n + 1 \}$, set $\delta_i = s_i \bar{W}_i'$.
        \item Set $\bar{W}_{n + 1} = 0$.
        Then, for $i$ from $n$ down to $0$, compute $\bar{W}_i$ using the recurrence relation
        \begin{equation}
            \label{eq:recurrence-W}
            \bar{W}_i = \bar{W}_{i + 1} + \delta_i - \delta_{i + 1} + P_i
            \,.
        \end{equation}
        \item For $i \in \{ 0, \dots, n + 1 \}$, set $W_i = \bar{W}_i / \bar{W}_0$, $W_i' = \bar{W}_i' / \bar{W}_0$, $W_i'' = \bar{W}_i'' / \bar{W}_0$.
        \item Build a piecewise linear interpolator $\widetilde{W}^{0}$ from $\{(r_i, W_i)\}_{i = 0}^{n + 1}$ for $W_{\bm{\theta}}$.
        \item Build a piecewise linear interpolator $\widetilde{W}^{1}$ from $\{(r_i, W_i')\}_{i = 0}^{n + 1}$ for $W_{\bm{\theta}}'$.
        \item Build a piecewise linear interpolator $\widetilde{W}^{2}$ from $\{(r_i, W_i'')\}_{i = 0}^{n + 1}$ for $W_{\bm{\theta}}''$.
    \end{enumerate}
\end{algorithm}

Now, we are ready to build a light version of $h_{\bm{\theta}}$.

\begin{algorithm}[Approximation of $h_{\bm{\theta}}$ from \gls{wt} estimates]
    \thlabel{alg:h-from-a}
    Let $\{x_i\}_{i = 0}^{k + 1}$ be the interpolation grid from \thref{alg:empirical-w-grid}.
    Let $\widetilde{W}^{0}$, $\widetilde{W}^{1}$ and $\widetilde{W}^{2}$ be the piecewise linear approximations to $W_{\bm{\theta}}$, $W_{\bm{\theta}}'$ and $W_{\bm{\theta}}''$ from \thref{alg:estimate-W}, respectively.
    To build an approximation for $h_{\bm{\theta}}$, follow these steps:
    \begin{enumerate}[noitemsep]
        \item For $i \in \{ 1, \dots, k \}$, set
        $W_i = \widetilde{W}^{0}(x_i)$, $W_i' = \widetilde{W}^{1}(x_i)$,
        $W_i'' = \widetilde{W}^{2}(x_i)$.
        \item Set $t_0 = 0$ and $t_{k + 1} = 1$.
        Then, for $i \in \{ 1, \dots, k \}$, set
        $t_i = (1 + x_i - W_i) / 2$.
        \item For $i \in \{ 1, \dots, k \}$, set
        $A_i = (1 + x_i + W_i) / 2$.
        \item For $i \in \{ 1, \dots, k \}$, set $M_i = 1 - W_i'$.
        \item For $i \in \{ 1, \dots, k \}$, set
        $A_i' = (1 + W_i') / M_i$.
        \item For $i \in \{ 1, \dots, k \}$, set
        $A_i'' = 4 W_i'' / M_i^3$.
        \item For $i \in \{ 1, \dots, k \}$, set $D_i = A_i' / A_i$.
        \item Set $h_0 = h_{k + 1} = 0$.
        Then, for $i \in \{ 1, \dots, k \}$, set
        \begin{equation}
            \label{eq:h-at-grid}
            h_i
            =
            1 + (1 - 2 t_i)
            D_i + t_i(1 - t_i)
            \left(
                \frac{A_i''}{A_i} -
                D_i^2
            \right) \,.
        \end{equation}
        \item Build a piecewise linear interpolator $\tilde{h}$ from $\{(t_i, h_i)\}_{i = 0}^{k + 1}$.
        \item Compute $I \approx \int_{0}^{1} \tilde{h}$ using the composite trapezoidal rule over $\{(t_i, h_i)\}_{i = 0}^{k + 1}$.
        \item Use $\hat{h} = \tilde{h} / I$ as an approximation for $h_{\theta}$ over $[0, 1]$.
    \end{enumerate}
\end{algorithm}

\thref{alg:h-from-a} deals with problems like the approximation of $h_{\bm{\theta}}$ and the rotation of $W_{\bm{\theta}}$.
On the other hand, \thref{alg:estimate-W} formalizes an efficient computation scheme for $W_{\bm{\theta}}$.
Both together allow computing $\ell(\bm{\theta} | \mathcal{D})$.
\figurename~\ref{fig:computation-graph} shows the full computation graph.
Gradients flow from the top \gls{pll} down the parameter vector using backpropagation.
We recommend using the \texttt{autograd} package~\cite{Maclaurin2015}, capable of performing automatic differentiation on native Python operations.
Some representative routines in our implementation are \texttt{numpy}'s \texttt{trapz}, for calculating integrals using the trapezoidal rule, and \texttt{cumsum}, for computing recurrences~\eqref{eq:recurrence-W-derivative} and~\eqref{eq:recurrence-W}.

\begin{figure}[ht]
    \centering
    \resizebox{\figurewidth}{!}{
        \input{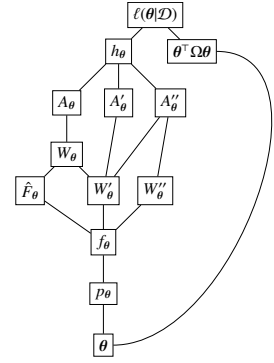}
    }
    \caption{
        Computation graph for the estimation process, from the bottom parameter vector $\bm{\theta}$ up to the \gls{pll} $\ell(\bm{\theta} | \mathcal{D})$.
    }
    \label{fig:computation-graph}
\end{figure}

\paragraph{Implementation tips}

\thref{alg:estimate-W} and \thref{alg:h-from-a} use discretization to approximate functions and integrals.
The finer-grained the discretization steps, the lower the error and the higher the computation time.
A trade-off between those dimensions is needed.
On the other hand, knowing $W_{\bm{\theta}}'(0^+) = -\infty$ and $W_{\bm{\theta}}''(0^+) = \infty$, we recommend choosing the grid in \thref{alg:estimate-W} so that points accumulate near zero, making the linear interpolation more effective.
Chebyshev nodes are a standard option.

To facilitate \gls{sbevc}'s estimation process, we propose to change the copula variable ordering whenever a steep slope is likely to appear for $W_{\bm{\theta}}$ near zero, which coincides with the minimum of $A_{\bm{\theta}}$ being placed at $t < 0.5$.
We can heuristically assess this situation by calculating the mode of the \gls{pdf} $h$, as suggested in~\cite{Eschenburg2013}.
If the mode appears at $t < 0.5$, the \gls{pf}'s minimum will \textit{likely} be placed at $t < 0.5$.
We support the hypothesis of~\cite{Eschenburg2013} based on our own experience.
Therefore, whenever the mode peaks at $t < 0.5$, we recommend changing the variable ordering before estimating and then flipping the resulting \gls{pf} $A$ as $\tilde{A}(t) = A(1 - t)$.

\subsection{Simulation}

Once the parameters $\bm{\theta}$ have been estimated, we propose to build $W_{\bm{\theta}}$ and $A_{\bm{\theta}}$ subsequently.
From that point on, querying the model (simulating, estimating probabilities, among others) will be equivalent to evaluating the \gls{pf} $A_{\bm{\theta}}$, as with any other \gls{evc}.

The algorithms in Section~\ref{subsec:estimation} stand valid, with some minor and convenient changes.
Since we only need to build the functions once, and not once per iteration, we may employ more expensive and accurate approximations.
In particular, $W_{\bm{\theta}}''$ can be evaluated without approximations.
More sophisticated procedures should replace trapezoidal rules and linear interpolations.
On the other hand, \thref{alg:empirical-w-grid} is no longer required.
Instead, we may employ a root-finding algorithm to invert the automorphism $t$.

The interpolation points of $A_{\bm{\theta}}$ could be input to the shape-preserving interpolation procedure by~\citeauthor{Schumaker1983}, which would guarantee that the resulting spline is convex over the whole domain~\cite{Schumaker1983}.
However, in general, the second derivative of such a spline would not be continuous, which would hinder the simulation process.
In practice, we recommend smoothness and accuracy over shape preservation, provided a sufficiently fine interpolation grid is used.

Finally, to draw samples from $C_{\bm{\theta}}$, we recommend the general algorithm in~\cite{Bouye2000}, which only requires inverting one of its partial derivatives~\cite{Doyon2013, Eschenburg2013}.
The root-finding algorithm in~\cite{Alefeld1995} proves to be highly effective.

\subsection{Refinement}

\label{subsec:refinement}

One of the limitations of \gls{sbevc} is the fact that an estimated \gls{pf} $A$ always satisfies $A'(0^+) = -1$ and $A'(1^-) = 1$.
These constraints are a consequence of our construction, which imposes $W'(0^+) = -\infty$ and $W'(1^-) = 0$ on the \gls{wt}.
In practice, however, these boundary constraints do not hinder the expressiveness of the resulting model.
Remember that, for instance, upper tail dependence does not relate to either boundary derivative of the \gls{pf}, but the mid-point value $A(1/2)$.
This fact contrasts with the nature of another semiparametric procedure like~\cite{HernandezLobato2011}, where a slope value entirely determined the tail index.

In \gls{sbevc}, misspecified slopes for the \gls{pf} have a much lower impact on the concordance (Blomqvist's beta) and upper tail dependence.
Nonetheless, since it might produce a slight bias, we propose a refinement step that could complement \gls{sbevc}.

\citeauthor{Khoudraji1995}'s method is best known for inducing asymmetry in symmetrical \glspl{evc}~\cite{Khoudraji1995}.
However, there is no reason why it could not apply to asymmetrical ones~\cite{Quessy2016}.
Consider a \gls{pf} $A$ obtained through \gls{sbevc}.
Differentiating~\eqref{eq:pickands-khoudraji}, we arrive at
\begin{equation}
    \label{eq:khoudraji-boundary-slopes}
    \begin{array}{r@{\ }c@{\ }l}
        A_{\alpha, \beta}'(0^+) &= \beta A'(0^+) &= -\beta \\
        A_{\alpha, \beta}'(1^-) &= \alpha A'(1^-) &= \alpha
    \end{array}
    \,,
\end{equation}
where, remember, $\alpha, \beta \in (0, 1]$, retrieving $A$ for $\alpha = \beta = 1$.
Even though $A$ is, in general, asymmetrical, we see from~\eqref{eq:khoudraji-boundary-slopes} that \citeauthor{Khoudraji1995}'s method serves our purpose of freely parameterizing the boundary slopes.\footnote{By convexity, the only \gls{evc} with either boundary slope equal to zero is the independence copula, with $A(t) = 1$ for all $t \in [0, 1]$. Therefore, except for this limiting case, both slopes are allowed to vary freely.}

We believe that adding two more parameters through \citeauthor{Khoudraji1995}'s method may improve the fitness of the resulting model in some particular cases, especially for weak correlations.
However, the inclusion of the new parameters in the gradient-based optimization seems unworkable, as it would invalidate the interpolation grid in~\thref{alg:h-from-a}.
A derivative-free optimization involving both the spline parameter vector $\bm{\theta}$ and the asymmetry parameters $\alpha$ and $\beta$ could be run,
starting from $\alpha = \beta = 1$ and some initial guess $\bm{\theta} = \bm{\theta}_0$ obtained through a gradient-based method.

    \section{Results}

\label{sec:results}

We will test \gls{sbevc} on simulated and actual data.
In both scenarios, we will compare \gls{sbevc} with the methodology by~\citeauthor{Cormier2014}~\cite{Cormier2014}.
We shall refer to their method as \gls{cobs}.
We have chosen \gls{cobs} for its flexibility and simplicity, sharing three relevant traits with \gls{sbevc}: complying with \gls{pf} constraints, using splines and exclusively addressing bivariate \glspl{evc}.

\subsection{Preliminaries}

\label{subsec:preliminaries}

Before diving into the specific settings of each experiment, let us clarify some shared configuration aspects.

\subsubsection{Optimization}

We performed all \gls{sbevc} estimates using standard Python scientific packages like \texttt{numpy} and \texttt{scipy}, and automatic differentiation, thanks to \texttt{autograd}~\cite{Maclaurin2015}.
Namely, we employed \texttt{scipy}'s implementation of the L-BFGS-B algorithm by~\citeauthor{Byrd1995}~\cite{Byrd1995}.
We assessed convergence by setting the \texttt{ftol=1e-6} configuration parameter in the \texttt{minimize} routine, which targets the relative change in the loss function between iterations.
Even though this value is very conservative, the procedure converges well, with reasonable execution times, as we will see.

All estimation runs started at the null spline, with all coordinates equal to zero, regardless of using a fixed affine centre.
Despite the caveats by~\citeauthor{HernandezLobato2011}~\cite{HernandezLobato2011}, as demonstrated in~\cite{Serrano2016}, current optimization methods can deal with complex problems even if the initial parameter values are far from the optimal solution.
Notwithstanding, we agree with~\citeauthor{HernandezLobato2011} that good initial guesses would speed up the process.

\gls{cobs} is very easy to implement on top of the \texttt{cobs} \textsf{R} package~\cite{Cormier2014}.
The \textsf{R} execution environment can be accessed from Python thanks to the \texttt{rpy2} Python package with little coding overhead.
Specifically, we employed the main \texttt{cobs} routine, selecting a smoothing splines regression of degree two by entering \texttt{lambda=-1}.
We raised the maximum number of iterations until convergence to 1,000 using the \texttt{maxiter} parameter.
We kept the maximum number of spline knots to the default value of 20.
Both the knots selection and the smoothing penalty was internally chosen by \texttt{cobs}.
The convexity requirement was introduced by setting \texttt{constraint="convex"}.
The boundary constraints were enforced over a fine 1,001-point equally-spaced grid over $[0, 1]$ using the \texttt{pointwise} argument.
Finally, we interpolated \texttt{cobs}' result over the former grid using cubic splines to allow for continuous second derivatives.

\subsubsection{Resources}

Both the \gls{ss} and actual data application would not have been possible without the vast repertoire of software artefacts and services currently available.

First, \gls{sbevc}, fully implemented in Python, was containerized using~\citetitle{Docker}~\cite{Docker}, which, apart from being ideal for achieving reproducible research, also helped to move our execution environment to the cloud with~\citetitle{CNCF}~\cite{CNCF}.
\citetitle{Docker} was also helpful for preparing a maintainable execution environment with Python and \textsf{R}, as required by \texttt{cobs}.

While developing and testing, we employed a local~\citetitle{MinikubeCommunity} cluster~\cite{MinikubeCommunity} on an Intel\textsuperscript{\tiny\textregistered} Core\textsuperscript{\tiny\textcopyright} i7-4700MQ CPU laptop with eight 2.40 GHz cores and 15.6 GiB of memory and operating system Ubuntu 20.04.3 LTS.
We entrusted the bulky \gls{ss} final executions to a cloud provider.
The Kubernetes service comprised 50 dynamically allocated nodes running on possibly different\footnote{Either (i) Platinum 8272CL, (ii) 8171M 2.1GHz, (iii) E5-2673 v4 2.3 GHz or (iv) E5-2673 v3 2.4 GHz.} Intel\textsuperscript{\tiny\textregistered} Xeon\textsuperscript{\tiny\textcopyright} architectures with Ubuntu 18.04.
Overall, each node counted on two virtual CPUs and seven GiB of memory at any given time.
Due to Kubernetes' requirements, only one CPU was available for Spark per node.

We sped up the experiments parallelizing specific tasks with~\citetitle{ASF2021}~\cite{ASF2021}.
To prepare the~\citetitle{ASF2021} setting with~\citetitle{CNCF}, two artefacts were of great help: the~\citetitle{DataMechanics}~\cite{DataMechanics} and the~\citetitle{GCP}~\cite{GCP}.

Finally,~\citetitle{ArgoProject}~\cite{ArgoProject} turned out to be helpful to manage all our~\citetitle{CNCF} experiments from the same friendly user interface, connecting the~\citetitle{CNCF} cluster to the Git repository.

\paragraph{Supplementary materials}

Access to source code and other deliverables will be provided upon acceptance for publication.

\subsection{Simulation study}

\label{subsec:simulations}

We conducted a \gls{ss} to test the effectiveness of \gls{sbevc} on a broad spectrum of cases with high confidence.
The \gls{ss} consists of three experiments.
The first one addresses the bias and variance tradeoffs by repeating the estimation process for many random samples drawn from a fixed copula in Table~\ref{tab:evc-families} for several parameter configurations.
The second experiment covers an even more extensive array of \glspl{evc} while focusing on validation through \gls{tvd}.
Then, the third one compares \gls{sbevc} with \gls{cobs} in terms of the \gls{rmise} in several scenarios with varying dependence strengths, asymmetries and sample sizes.

\gls{sbevc} entails numerous non-trivial analytic and geometric transformations.
Even though we could argue that none of them exceeds a reasonable level of complexity, clever algorithms and powerful computational resources are still needed for it to work in practice.
In particular, optimization algorithms are vital to finding solutions that maximize~\eqref{eq:loss-function} under memory and time constraints.
The joint behaviour of all these pieces is difficult to assess from a purely theoretical perspective without simulations.

\paragraph{Common settings}

Throughout the \gls{ss}, copula models build upon a cubic orthonormal \gls{zbs} basis.
The grid size was 200 both in \thref{alg:estimate-W} and \thref{alg:h-from-a} ($n + 2 = 200$ and $k + 2 = 200$, respectively).
Also, we took $\epsilon = 10^{-9}$ in \thref{alg:estimate-W}.
These settings express an adequate balance between approximation accuracy and reasonable execution times.

\subsubsection{Bias and variance}

The first part of the \gls{ss} consisted of 30 individual experiments, focusing on a particular instance of a copula family.
For each copula instance, we performed an estimation run with \gls{sbevc} on each of 100 different random samples from the copula for 3,000 runs.
Then, for each 100-sample experiment, we collected the pointwise means and pointwise 98\% confidence intervals of the estimated \glspl{pf} and compared both functional statistics with the original \gls{pf}.
We handled each random sample as an independent~\citetitle{ASF2021} task to speed up computations.

We employed the two families in Table~\ref{tab:evc-families}: the Gumbel and the Galambos families.
These are probably two of the most well-known \glspl{evc}.
\citeauthor{Genest2017} even studied and found a relation between the two in~\cite{Genest2017}, knowing the similarity of their \glspl{pf}.
In each copula family, we tested up to five different values of the unique parameter $\theta$, giving rise to different correlation levels.
Finally, apart from the pure form of each Gumbel or Galambos copula, we introduced asymmetry through~\citeauthor{Khoudraji1995}'s device, taking either $\alpha$ or $\beta$ equal to $0.5$ and leaving the other as $1$.
This configuration was precisely the one that demonstrated higher asymmetry in~\cite{Genest2011}.
As a side note, we remind that the asymmetrical extensions of the Gumbel and Galambos families are known as the Tawn and Joe families, respectively.

Each of the random copula samples consisted of 1,000 observations.
Our models were fit using 13 parameters in all cases: 10 more than the ground truth copula families.
We believe that the specific number of parameters has less impact in a semiparametric context, where one typically employs a large number and then reduces overfitting by penalizing curvature.
In the end, the target of this semiparametric method is a function that lives in an infinite-dimensional space.
In practice, both the Gumbel and the Galambos families need fewer parameters than 13, but in this case, we have preferred to stick to a large number to showcase the method's performance in a general setting.
Finally, for both the Gumbel and Galambos copulas, we used a curvature penalty factor $\lambda = 10^{-5}$.

\figurename~\ref{fig:simulation-study-gumbel} shows the results for the Gumbel copulas, while \figurename~\ref{fig:simulation-study-galambos} presents those of the Galambos.
The results are qualitatively very similar.
\gls{sbevc} displays low biases and variances in all cases.
If any, the highest biases appear under asymmetry and low correlations.
This behaviour matches the known limitation of \gls{sbevc} as regards the boundary slopes, which have fixed values.
Variance is also higher for small correlations, in agreement with~\cite{Kamnitui2019}.

\begin{figure*}
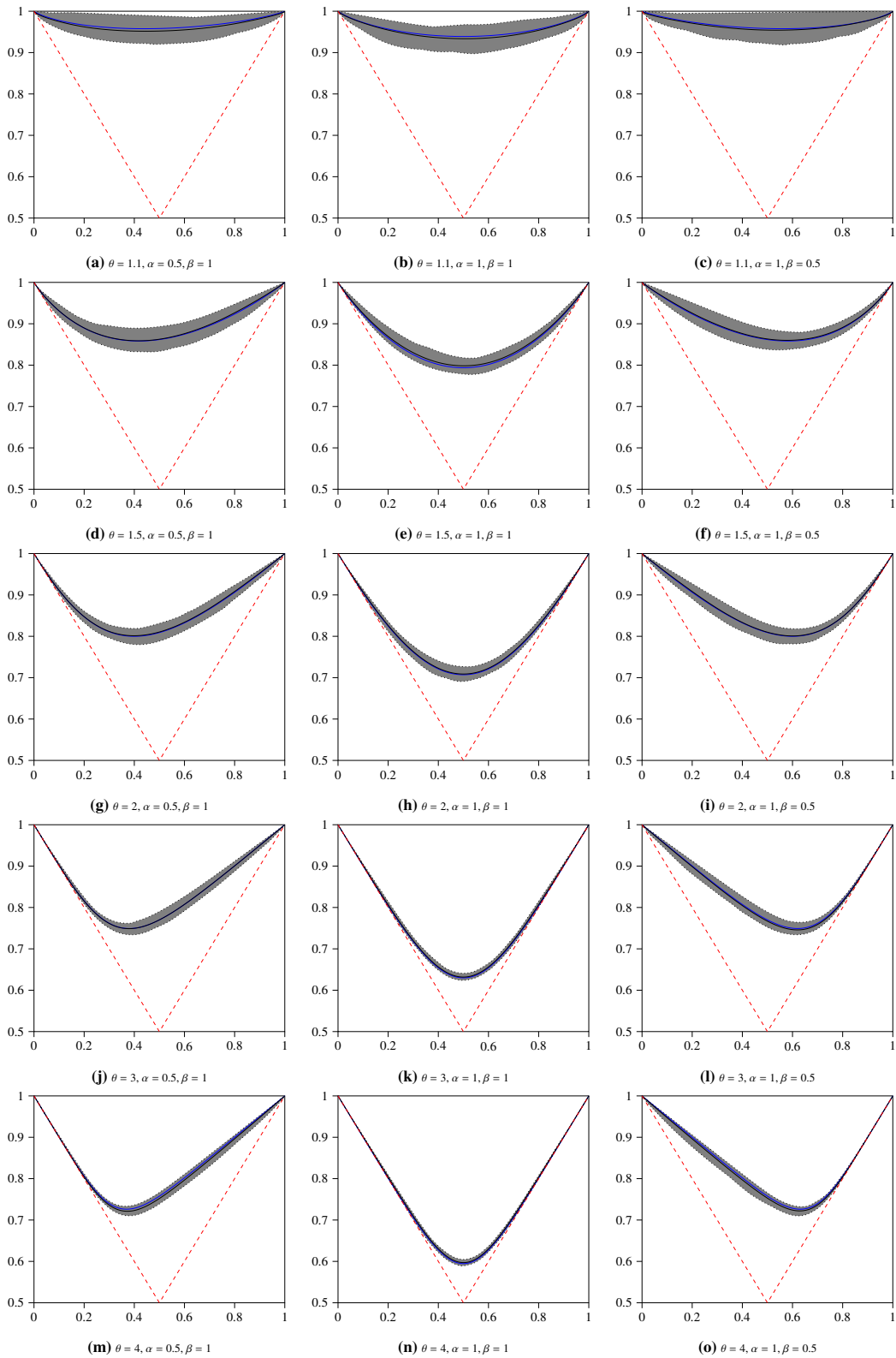

    \centering

    \begin{subfigure}[t]{\subfigurewidth}
        \centering
        \resizebox{\linewidth}{!} {
            \input{tikz/05-a-gumbel_1_1_alpha.tex}
        }
        \caption{\tiny $\theta = 1.1$, $\alpha = 0.5$, $\beta = 1$}
    \end{subfigure}
    \begin{subfigure}[t]{\subfigurewidth}
        \centering
        \resizebox{\linewidth}{!} {
            \input{tikz/05-b-gumbel_1_1.tex}
        }
        \caption{\tiny $\theta = 1.1$, $\alpha = 1$, $\beta = 1$}
    \end{subfigure}
    \begin{subfigure}[t]{\subfigurewidth}
        \centering
        \resizebox{\linewidth}{!} {
            \input{tikz/05-c-gumbel_1_1_beta.tex}
        }
        \caption{\tiny $\theta = 1.1$, $\alpha = 1$, $\beta = 0.5$}
    \end{subfigure}

    \begin{subfigure}[t]{\subfigurewidth}
        \centering
        \resizebox{\linewidth}{!} {
            \input{tikz/05-d-gumbel_1_5_alpha.tex}
        }
        \caption{\tiny $\theta = 1.5$, $\alpha = 0.5$, $\beta = 1$}
    \end{subfigure}
    \begin{subfigure}[t]{\subfigurewidth}
        \centering
        \resizebox{\linewidth}{!} {
            \input{tikz/05-e-gumbel_1_5.tex}
        }
        \caption{\tiny $\theta = 1.5$, $\alpha = 1$, $\beta = 1$}
    \end{subfigure}
    \begin{subfigure}[t]{\subfigurewidth}
        \centering
        \resizebox{\linewidth}{!} {
            \input{tikz/05-f-gumbel_1_5_beta.tex}
        }
        \caption{\tiny $\theta = 1.5$, $\alpha = 1$, $\beta = 0.5$}
    \end{subfigure}

    \begin{subfigure}[t]{\subfigurewidth}
        \centering
        \resizebox{\linewidth}{!} {
            \input{tikz/05-g-gumbel_2_alpha.tex}
        }
        \caption{\tiny $\theta = 2$, $\alpha = 0.5$, $\beta = 1$}
    \end{subfigure}
    \begin{subfigure}[t]{\subfigurewidth}
        \centering
        \resizebox{\linewidth}{!} {
            \input{tikz/05-h-gumbel_2.tex}
        }
        \caption{\tiny $\theta = 2$, $\alpha = 1$, $\beta = 1$}
    \end{subfigure}
    \begin{subfigure}[t]{\subfigurewidth}
        \centering
        \resizebox{\linewidth}{!} {
            \input{tikz/05-i-gumbel_2_beta.tex}
        }
        \caption{\tiny $\theta = 2$, $\alpha = 1$, $\beta = 0.5$}
    \end{subfigure}

    \begin{subfigure}[t]{\subfigurewidth}
        \centering
        \resizebox{\linewidth}{!} {
            \input{tikz/05-j-gumbel_3_alpha.tex}
        }
        \caption{\tiny $\theta = 3$, $\alpha = 0.5$, $\beta = 1$}
    \end{subfigure}
    \begin{subfigure}[t]{\subfigurewidth}
        \centering
        \resizebox{\linewidth}{!} {
            \input{tikz/05-k-gumbel_3.tex}
        }
        \caption{\tiny $\theta = 3$, $\alpha = 1$, $\beta = 1$}
    \end{subfigure}
    \begin{subfigure}[t]{\subfigurewidth}
        \centering
        \resizebox{\linewidth}{!} {
            \input{tikz/05-l-gumbel_3_beta.tex}
        }
        \caption{\tiny $\theta = 3$, $\alpha = 1$, $\beta = 0.5$}
    \end{subfigure}

    \begin{subfigure}[t]{\subfigurewidth}
        \centering
        \resizebox{\linewidth}{!} {
            \input{tikz/05-m-gumbel_4_alpha.tex}
        }
        \caption{\tiny $\theta = 4$, $\alpha = 0.5$, $\beta = 1$}
    \end{subfigure}
    \begin{subfigure}[t]{\subfigurewidth}
        \centering
        \resizebox{\linewidth}{!} {
            \input{tikz/05-n-gumbel_4.tex}
        }
        \caption{\tiny $\theta = 4$, $\alpha = 1$, $\beta = 1$}
    \end{subfigure}
    \begin{subfigure}[t]{\subfigurewidth}
        \centering
        \resizebox{\linewidth}{!} {
            \input{tikz/05-o-gumbel_4_beta.tex}
        }
        \caption{\tiny $\theta = 4$, $\alpha = 1$, $\beta = 0.5$}
    \end{subfigure}

    \caption{
        \gls{ss} for the Gumbel family.
        The blue line designates the ground-truth \gls{pf}, whereas the black corresponds to the estimations' pointwise mean.
        The shaded areas represent 98\% pointwise confidence intervals for the estimates.
    }
    \label{fig:simulation-study-gumbel}
\end{figure*}

\begin{figure*}
    \centering

    \begin{subfigure}[t]{\subfigurewidth}
        \centering
        \resizebox{\linewidth}{!} {
            \input{tikz/06-a-galambos_0_5_alpha.tex}
        }
        \caption{\tiny $\theta = 0.5$, $\alpha = 0.5$, $\beta = 1$}
    \end{subfigure}
    \begin{subfigure}[t]{\subfigurewidth}
        \centering
        \resizebox{\linewidth}{!} {
            \input{tikz/06-b-galambos_0_5.tex}
        }
        \caption{\tiny $\theta = 0.5$, $\alpha = 1$, $\beta = 1$}
    \end{subfigure}
    \begin{subfigure}[t]{\subfigurewidth}
        \centering
        \resizebox{\linewidth}{!} {
            \input{tikz/06-c-galambos_0_5_beta.tex}
        }
        \caption{\tiny $\theta = 0.5$, $\alpha = 1$, $\beta = 0.5$}
    \end{subfigure}

    \begin{subfigure}[t]{\subfigurewidth}
        \centering
        \resizebox{\linewidth}{!} {
            \input{tikz/06-d-galambos_0_75_alpha.tex}
        }
        \caption{\tiny $\theta = 0.75$, $\alpha = 0.5$, $\beta = 1$}
    \end{subfigure}
    \begin{subfigure}[t]{\subfigurewidth}
        \centering
        \resizebox{\linewidth}{!} {
            \input{tikz/06-e-galambos_0_75.tex}
        }
        \caption{\tiny $\theta = 0.75$, $\alpha = 1$, $\beta = 1$}
    \end{subfigure}
    \begin{subfigure}[t]{\subfigurewidth}
        \centering
        \resizebox{\linewidth}{!} {
            \input{tikz/06-f-galambos_0_75_beta.tex}
        }
        \caption{\tiny $\theta = 0.75$, $\alpha = 1$, $\beta = 0.5$}
    \end{subfigure}

    \begin{subfigure}[t]{\subfigurewidth}
        \centering
        \resizebox{\linewidth}{!} {
            \input{tikz/06-g-galambos_1_alpha.tex}
        }
        \caption{\tiny $\theta = 1$, $\alpha = 0.5$, $\beta = 1$}
    \end{subfigure}
    \begin{subfigure}[t]{\subfigurewidth}
        \centering
        \resizebox{\linewidth}{!} {
            \input{tikz/06-h-galambos_1.tex}
        }
        \caption{\tiny $\theta = 1$, $\alpha = 1$, $\beta = 1$}
    \end{subfigure}
    \begin{subfigure}[t]{\subfigurewidth}
        \centering
        \resizebox{\linewidth}{!} {
            \input{tikz/06-i-galambos_1_beta.tex}
        }
        \caption{\tiny $\theta = 1$, $\alpha = 1$, $\beta = 0.5$}
    \end{subfigure}

    \begin{subfigure}[t]{\subfigurewidth}
        \centering
        \resizebox{\linewidth}{!} {
            \input{tikz/06-j-galambos_1_5_alpha.tex}
        }
        \caption{\tiny $\theta = 1.5$, $\alpha = 0.5$, $\beta = 1$}
    \end{subfigure}
    \begin{subfigure}[t]{\subfigurewidth}
        \centering
        \resizebox{\linewidth}{!} {
            \input{tikz/06-k-galambos_1_5.tex}
        }
        \caption{\tiny $\theta = 1.5$, $\alpha = 1$, $\beta = 1$}
    \end{subfigure}
    \begin{subfigure}[t]{\subfigurewidth}
        \centering
        \resizebox{\linewidth}{!} {
            \input{tikz/06-l-galambos_1_5_beta.tex}
        }
        \caption{\tiny $\theta = 1.5$, $\alpha = 1$, $\beta = 0.5$}
    \end{subfigure}

    \begin{subfigure}[t]{\subfigurewidth}
        \centering
        \resizebox{\linewidth}{!} {
            \input{tikz/06-m-galambos_3_alpha.tex}
        }
        \caption{\tiny $\theta = 3$, $\alpha = 0.5$, $\beta = 1$}
    \end{subfigure}
    \begin{subfigure}[t]{\subfigurewidth}
        \centering
        \resizebox{\linewidth}{!} {
            \input{tikz/06-n-galambos_3.tex}
        }
        \caption{\tiny $\theta = 3$, $\alpha = 1$, $\beta = 1$}
    \end{subfigure}
    \begin{subfigure}[t]{\subfigurewidth}
        \centering
        \resizebox{\linewidth}{!} {
            \input{tikz/06-o-galambos_3_beta.tex}
        }
        \caption{\tiny $\theta = 3$, $\alpha = 1$, $\beta = 0.5$}
    \end{subfigure}

    \caption{
        \gls{ss} for the Galambos family.
        The blue line designates the ground-truth \gls{pf}, whereas the black corresponds to the estimations' pointwise mean.
        The shaded areas represent 98\% pointwise confidence intervals for the estimates.
    }
    \label{fig:simulation-study-galambos}
\end{figure*}

In all simulations, we employed the trick mentioned in Section~\ref{subsec:estimation} for selecting the \textit{a priori} more convenient variable ordering to avoid numerical instabilities.
The procedure worked well, as demonstrated by the nearly identical results obtained for either $\alpha = 0.5$ or $\beta = 0.5$.

\paragraph{Execution details}

We approximately recorded the execution times for the 30 experiments with the assistance of the Spark Web UI.
Roughly 50\% finished in three minutes, 75\% in four and 90\% in seven.
Consistently below eight minutes, the most time-consuming experiments correspond to the highest correlated Gumbel and Galambos copulas.
That is presumably because the optimal solution was furthest from the starting null vector.
Since there were 100 tasks on each job and the cluster only had 50 nodes, we could expect the average execution time to be half of the previous values.

\subsubsection{Total variation}

\label{subsubsec:simulation-study-total-variation}

In the second part of the \gls{ss}, we generated $n = 200$ random \glspl{sbevc}.
We chose the affine spline model with centre~\eqref{eq:orthogonal-projection} as the first building block, assuming uniformly distributed knots and $d = 13$ parameters.
Let us call $\bm{\theta}_0 \in \mathbb{R}^d$ the coordinates of the centre of the affine model.
We then ran an MCMC simulation assuming the model coordinates $\bm{\theta}$ in~\eqref{eq:affine-bayes-model} were distributed according to the following \gls{pdf}:
\begin{equation*}
    p({\bm{\theta}}) \propto
    \left\{
        \begin{aligned}
            e^{-\lambda {\bar{\bm{\theta}}}^{\top} \Omega \bar{\bm{\theta}}} \,,
                & \ \text{if} \ \lVert \bm{\theta} \rVert_2 \leq R \\
            0 \,,
                & \ \text{if} \ \lVert \bm{\theta} \rVert_2 > R
        \end{aligned}
    \right.
    \,,
    \ \text{for} \
    \bar{\bm{\theta}} = \bm{\theta} + \bm{\theta}_0
    \,,
\end{equation*}
where $\Omega$ is the curvature matrix of the underlying spline, as described in Section~\ref{subsec:estimation}, and $\lambda$ and $R$ are tuning parameters.
The previous model is the truncated version of an improper prior based on curvature penalization, with factor $\lambda$.
The support of the distribution is the hyperball of radius $R$.

We tuned the parameters with values $\lambda = 10^{-4}$ and $R = 5$ so that the resulting splines covered a wide range of correlations (in the sense of the \gls{gc}) and were, at the same time, smooth.
Finally, to prevent any asymmetry, we replaced the even elements in the sequence with their corresponding mirrored versions $\tilde{A}(t) = A(1 - t)$.
\figurename~\ref{fig:random-pickands} shows a subsample of the generated random \glspl{pf}.
They cover the area between the support lines and the upper bound line in a reasonably balanced way.
We also employed the heuristic to determine the most suitable variable ordering in this part of the \gls{ss}.
Hence, we expected \gls{sbevc} to perform well regardless of the orientation of the \gls{pf}.

\begin{figure}[ht]
    \centering
    \resizebox{\figurewidth}{!}{
        \input{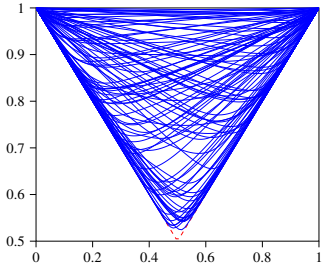}
    }
    \caption{
        A subsample of size 100 from the whole population of random \glspl{pf} used in the second part of the \gls{ss}.
    }
    \label{fig:random-pickands}
\end{figure}

For each element in the sequence $\{\bm{\theta}_i\}_{i = 1}^n$, we built the \gls{evc} $C_{\bm{\theta}_i}$ and performed several estimation runs on random samples of different sizes $\{S_j\}_{j = 1}^m$.
All fitted models had the same number of parameters as the ground truth splines ($d = 13$) and employed the same affine translation.
The penalty factor in the loss function~\eqref{eq:loss-function} was also set to $\lambda = 10^{-4}$.
The only aspect in which estimated models differed from ground truth is spline knot placement, which was uniform for the latter, but empirically assessed for the former.
Then, for each sample size $S_j$, we estimated a copula $C_{ij}$ using \gls{sbevc} and assessed divergence from ground truth through \gls{tvd}
\begin{equation}
    \label{eq:total-variation-copulas}
    d_{\text{TV}}(C_{\bm{\theta}_i}, C_{ij}) =
    \frac{1}{2}
    \int_{[0, 1]^2} \lvert c_{\bm{\theta}_i} (u, v) - c_{ij} (u, v) \rvert
    \ du dv
    \,,
\end{equation}
where $c_{\bm{\theta}_i}$ and $c_{ij}$ are the \glspl{pdf} of $C_{\bm{\theta}_i}$ and $C_{ij}$, respectively.
The \gls{tvd} defined in~\eqref{eq:total-variation-copulas} is the bivariate counterpart of \thref{def:total-variation} and thus provides an upper bound on the difference between the measured values of each copula on any measurable set $B \subset [0, 1]^2$.
Therefore,~\eqref{eq:total-variation-copulas} is a very conservative evaluation measure.

Table~\ref{tab:random-evcopulas} presents the main summary statistics from the experiment.
Each~\citetitle{ASF2021} task targeted a different random \gls{evc}.
Then, each task comprised four estimation runs.
The table shows promising results, considering the complexity of the ground truth models and the finiteness of samples.
Mean values are typically below $0.05$, whereas the 75\% and 90\% quantiles do not surpass the $0.10$ threshold.
Table~\ref{tab:random-evcopulas} shows that \gls{tvd} decreases as the sample size increases.
\figurename~\ref{fig:total-variation-boxplot} reveals some outliers, which become rarer with larger sample sizes.
Besides, \figurename~\ref{fig:total-variation-boxplot} suggests \gls{sbevc} can handle even the highest correlated samples well.
The outliers are due to the sensitivity of the \gls{tvd} metric to deviations in highly correlated samples.
As \figurename~\ref{fig:total-variation-outliers} shows, the \gls{tvd} metric positively correlates with the \gls{gc} even for moderate values of the latter.

\begin{table}
    \centering
    \resizebox{\columnwidth}{!} {
        \begin{tabular}{lllllll}
\toprule
{} &    mean &     10\% &     25\% &     50\% &     75\% &     90\% \\
sample size &         &         &         &         &         &         \\
\midrule
250         &  .06642 &   .0226 &  .03237 &  .04578 &  .06279 &  .08914 \\
500         &  .04591 &  .01562 &  .02492 &  .03797 &  .05527 &  .07525 \\
1000        &  .03701 &  .01333 &   .0213 &  .03093 &  .04301 &  .05994 \\
2000        &   .0322 &  .01157 &  .01681 &  .02518 &  .03421 &  .05841 \\
\bottomrule
\end{tabular}

    }
    \caption{
        Main summary statistics from the second part of the \gls{ss}, mean and quantiles, for each sample size.
        \gls{tvd} decreases as sample size increases.
    }
    \label{tab:random-evcopulas}
\end{table}

\begin{figure}[ht]
    \centering
    \resizebox{\figurewidth}{!}{
        \input{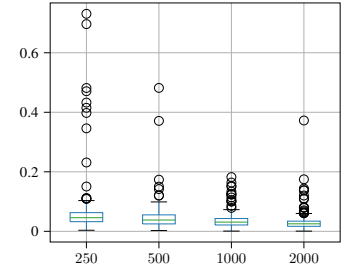}
    }
    \caption{
        Box plots of the \gls{tvd} distributions for each sample size.
    }
    \label{fig:total-variation-boxplot}
\end{figure}

\begin{figure}[ht]
    \centering
    \resizebox{\columnwidth}{!}{
        \includegraphics[width=\columnwidth]{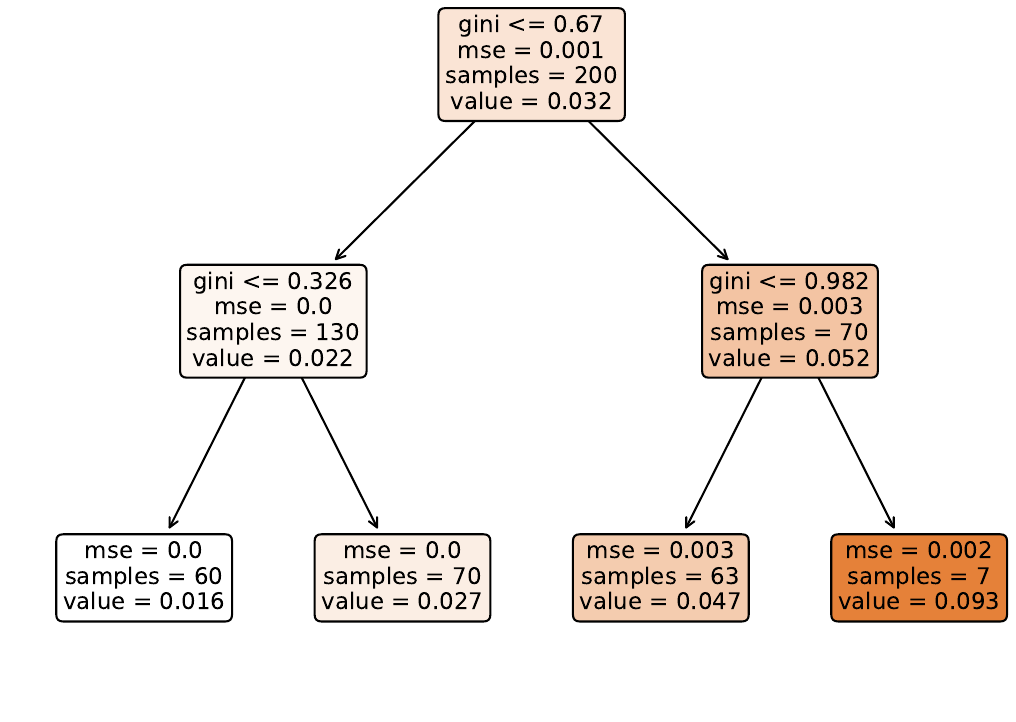}
    }
    \caption{
        Regression tree (depth 2) of \gls{tvd} on the \gls{gc} for sample size 2,000.
        On average, \gls{sbevc} performs well even on the seven instances with \gls{gc} above 98\%.
    }
    \label{fig:total-variation-outliers}
\end{figure}

\paragraph{Execution details}

The total execution time for the 200 random \glspl{evc} was roughly one hour.
Considering the cluster only had 50 nodes, the average execution time for a task in this experiment was approximately 15 minutes.

\subsubsection{RMISE}

The third part of the \gls{ss} consisted of 20 individual experiments with different settings.
We generated 100 random samples in each experiment and fitted them using \gls{sbevc} and \gls{cobs}.
Then, we collected the squared $L^2([0, 1])$ distances between the ground-truth \gls{pf} and the estimated \gls{pf} through \gls{sbevc} and \gls{cobs}.
For each estimation method, these measures were averaged to approximate the \gls{rmise} like in~\cite{Vettori2017}.
The lower the \gls{rmise}, the better the technique.
We also assessed the statistical significance of the results through a Wilcoxon signed-rank test applied on the unaggregated squared $L^2([0, 1])$ distances, assuming the null hypothesis that \gls{cobs} produces better results than \gls{sbevc}.
As in the previous parts of the \gls{ss}, we used for \gls{sbevc} $d = 13$ \gls{zbs} elements and the curvature penalty factor was $\lambda = 10^{-4}$.

\begin{table}
    \centering
    \resizebox{\columnwidth}{!} {
        \begin{tabular}{lrrr}
\toprule
{} & $\mathrm{RMISE}(\mathrm{SBEVC})$ & $\mathrm{RMISE}(\mathrm{COBS})$ &   p-value \\
settings                                  &                                  &                                 &           \\
\midrule
$\theta = 1.1$, $\beta = 0.5$, $n = 250$  &                           .02375 &                          .04314 &  2.29E-08 \\
$\theta = 1.1$, $\beta = 0.5$, $n = 1000$ &                           .01441 &                          .02611 &  4.57E-10 \\
$\theta = 1.1$, $\beta = 1.0$, $n = 250$  &                           .02497 &                          .04534 &  7.80E-07 \\
$\theta = 1.1$, $\beta = 1.0$, $n = 1000$ &                           .01283 &                          .02362 &  2.60E-10 \\
$\theta = 1.5$, $\beta = 0.5$, $n = 250$  &                           .02097 &                          .04466 &  1.15E-13 \\
$\theta = 1.5$, $\beta = 0.5$, $n = 1000$ &                           .01198 &                          .02749 &  9.90E-16 \\
$\theta = 1.5$, $\beta = 1.0$, $n = 250$  &                           .01518 &                          .04018 &  5.15E-14 \\
$\theta = 1.5$, $\beta = 1.0$, $n = 1000$ &                           .00807 &                          .02326 &  4.64E-14 \\
$\theta = 2.0$, $\beta = 0.5$, $n = 250$  &                           .01558 &                          .04233 &  5.86E-15 \\
$\theta = 2.0$, $\beta = 0.5$, $n = 1000$ &                           .00839 &                          .02384 &  1.14E-15 \\
$\theta = 2.0$, $\beta = 1.0$, $n = 250$  &                           .01116 &                          .02854 &  1.15E-13 \\
$\theta = 2.0$, $\beta = 1.0$, $n = 1000$ &                           .00527 &                          .01835 &  6.94E-17 \\
$\theta = 3.0$, $\beta = 0.5$, $n = 250$  &                           .01276 &                          .03909 &  1.21E-17 \\
$\theta = 3.0$, $\beta = 0.5$, $n = 1000$ &                           .00626 &                          .02494 &  5.20E-17 \\
$\theta = 3.0$, $\beta = 1.0$, $n = 250$  &                           .00721 &                          .04049 &  2.82E-14 \\
$\theta = 3.0$, $\beta = 1.0$, $n = 1000$ &                           .00490 &                          .01568 &  1.34E-15 \\
$\theta = 4.0$, $\beta = 0.5$, $n = 250$  &                           .01128 &                          .03556 &  7.79E-17 \\
$\theta = 4.0$, $\beta = 0.5$, $n = 1000$ &                           .00487 &                          .02241 &  5.74E-18 \\
$\theta = 4.0$, $\beta = 1.0$, $n = 250$  &                           .00986 &                          .05981 &  1.92E-15 \\
$\theta = 4.0$, $\beta = 1.0$, $n = 1000$ &                           .00513 &                          .01135 &  7.04E-08 \\
\bottomrule
\end{tabular}

    }
    \caption{
        A comparison between \gls{sbevc} and \gls{cobs} based on \gls{rmise}.
        All settings refer to the parametric Gumbel family in Table~\ref{tab:evc-families}.
        \gls{rmise} is statistically significantly lower for \gls{sbevc}.
    }
    \label{tab:rmise-cormier}
\end{table}

The \gls{rmise} results are presented in Table~\ref{tab:rmise-cormier}.
Each row corresponds to a different experiment.
The left-most column shows the experiment settings.
The underlying parametric copula belongs to an asymmetrical Gumbel family (see Table~\ref{tab:evc-families}) using~\citeauthor{Khoudraji1995}'s device~\eqref{eq:pickands-khoudraji}.
Then, $\theta$ is the parameter of the Gumbel copula and $\beta$ is one of the asymmetry parameters, fixing $\alpha = 1$ as a constant throughout all configurations.
The combinations of $\theta$, and $\beta$ are precisely those that appear in the mid and right columns in \figurename~\ref{fig:simulation-study-gumbel}.
We draw samples from each copula configuration with a medium ($n = 1,000$) and a small ($n = 250$) sample size.
As we can see, \gls{sbevc} significantly outperforms \gls{cobs} in all circumstances, roughly halving the \gls{rmise}.
\subsection{Case study}

\label{subsec:applications}

The following sections will solve a statistical modelling and simulation problem on LIGO and Virgo's precious \gls{gw} detection data.
The aim of this case study is twofold.
On the one hand, we will examine the steps in the construction and estimation of \glspl{sbevc} with an authentic hands-on experience.
On the other hand, we aim to compare \gls{sbevc} with \gls{cobs} on non-synthetic samples.
After some sensible transformations, LIGO and Virgo's data have an \gls{evc} dependence structure.
The underlying copula presents a very different look than what we have seen in \figurename~\ref{fig:simulation-study-gumbel} and \figurename~\ref{fig:simulation-study-galambos}.

\subsubsection{History}

In 2015, the LIGO\footnote{Laser Interferometer Gravitational-Wave Observatory.}
Scientific Collaboration and the Virgo Collaboration announced the first direct detection of a \gls{gw}, produced by the merger of a binary black hole~\cite{Abbott2016}.
The existence of \glspl{gw}, ripples in the fabric of space-time, was predicted by Einstein's theory of general relativity in 1916 as a mathematical construct that many thought to have no physical meaning~\cite{CervantesCota2016}.
It took nearly a century from its prediction and 60 years of search to experimentally ascertain the discovery, opening a new era for astronomy.

Only the most extreme events in the Universe, in terms of energy, can generate \glspl{gw} strong enough to be detected by current
experimental procedures due to the small value of the gravitational constant~\cite{CervantesCota2016}, which expresses the rigidity of space-time.
A significant amount of human and material resources are needed to detect \glspl{gw}.
Specifically, sufficiently sensitive interferometers need to have arms several kilometres long.
Additionally, in order to discriminate between true detections and spurious local signals (like electromagnetic radiation or earthquakes), several detectors, far apart from each other, are needed.

LIGO, settled in the United States, with two laboratories, was the first detector of an advanced global network that aims to increase discoveries' accuracy and exhaustiveness~\cite{Abbott2016}, soon to be joined by others, most notably Virgo, in Italy.
Despite LIGO and Virgo joining efforts, it was LIGO that reported the first detection since the Virgo facilities were not operating at that time for upgrading reasons.
Since the first detection in 2015, the collaboration of LIGO and Virgo has confirmed 50 events.
They all correspond to massive body mergers, mainly black holes and neutron stars.

\subsubsection{Data}

We have chosen the \gls{gw} detection dataset gathered by the LIGO and Virgo collaborations during their first three observation runs to test the applicability of \gls{sbevc}.
It consists of 50 rows and two columns.
Each row represents a merger event, while each column features one of the masses involved in the event, measured in solar mass units (M\textsubscript{\(\odot\)}).
During the first and second observation runs, 11 events were detected, while the third run provided 39.
The first event was GW150914, in September 2015, and the last one, GW190930\_133541, in September 2019.
LIGO and Virgo report the larger of the two masses, the primary mass, as the first tuple component, followed by the secondary mass.

We believe that very few datasets better represent bivariate data, considering the very nature of \glspl{bm}.
Bivariate models are usually building blocks for higher-dimensional ones, but in this case, all the attention is focused on two mass values of high scientific relevance.
Another aspect that adds to this significance is the scarcity of data, for only 50 events have been recorded during five years.
This scarcity contrasts with the increasingly large amounts of information coming from IoT, social networks, finance, among others, in the current era of Big Data.

\subsubsection{Model}

As mentioned above, the dataset consists of 50 bivariate observations $\mathcal{D} = \{(M_1^{(i)}, M_2^{(i)})\}_{i = 1}^{50}$, where $M_1^{(i)} \geq M_2^{(i)}$.
The last censoring constraint makes the dataset not directly tractable by usual copulas, supported on the whole $[0, 1]^2$, unless conveniently preprocessed.

LIGO and Virgo perform a statistical analysis of the joint mass distribution~\cite{Abbott2021}.
They consider two separate univariate models.
The first one models the primary mass $M_1$ unchanged, whereas the second one models the \textit{mass ratio} $Q = M_2 / M_1$ conditioning on $M_1$.
Since $M_1 \geq M_2$, by definition, the resulting model captures by construction the censoring constraint.
The final joint model is formed by the vector $(M_1, Q M_1)$.

Instead of considering an auxiliary ratio variable, we directly model a bivariate mass vector.
We turned the censoring problem into an exchangeable one, where both masses played the same role.
The original dataset $\mathcal{D}$ does not allow such a treatment, so we hypothesized a new sample space where primary masses are detected with 50\% probability at the first vector component and 50\% at the second one.
This scenario corresponds to detections reporting masses without considering their relative order.
Therefore, we built a new sample $\widetilde{\mathcal{D}} = \{(\widetilde{M}_1^{(i)}, \widetilde{M}_2^{(i)})\}_{i = 1}^{100}$, where $\widetilde{M}_j^{(i)} = M_j^{(i)}$ or $\widetilde{M}_j^{(i)} = M_{1 + j \bmod 2}^{(i - 50)}$, respectively, if $i \leq 50$ or $i > 50$.
We then targeted a random vector $(\widetilde{M}_1, \widetilde{M}_2)$.
To retrieve the original primary-secondary mass model, we just had to take $M_1 = \max \{\widetilde{M}_1, \widetilde{M}_2\}$ and $M_2 = \min \{\widetilde{M}_1, \widetilde{M}_2\}$.

Using the previous up-sampled and symmetrical dataset, we fitted (i) a single univariate mass model $f$ for both margins and (ii) a copula model $C$ of the dependency between mass ranks.

\paragraph{Univariate margin model}

We decided to employ a semiparametric model for the univariate margin mass model.
We successfully tried the same technique we used for modelling the density $f$ in~\eqref{eq:williamson}: Bayes space \glspl{pdf} built from \glspl{zbs}.

The result of our experiment is shown in \figurename~\ref{fig:ligo-pdf}.
We selected 17 parameters, with knots distributed according to the original sample between 1 M\textsubscript{\(\odot\)} and 100 M\textsubscript{\(\odot\)}, and a curvature penalty factor of 10.
The first mode, near 1 M\textsubscript{\(\odot\)}, mostly corresponds to neutron stars; black hole masses typically range beyond 5 M\textsubscript{\(\odot\)}.

\begin{figure}[ht]
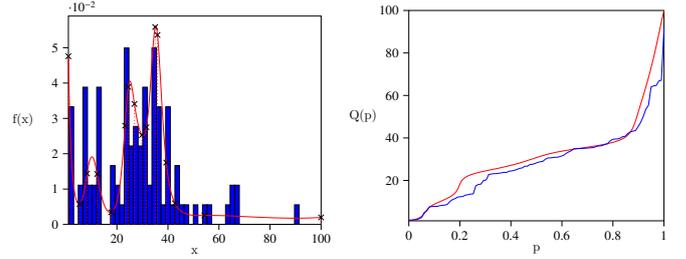

    \centering
    \begin{subfigure}{\figurewidth}
        \centering
        \resizebox{\columnwidth}{!} {
            \input{tikz/10-a-ligo-pdf.tex}
        }
    \end{subfigure}%
    \begin{subfigure}{\figurewidth}
        \centering
        \resizebox{\columnwidth}{!} {
            \input{tikz/10-b-ligo-inv-cdf.tex}
        }
    \end{subfigure}
    \caption{
        Univariate margin mass model.
        The \gls{pdf} is displayed on the left, whereas the quantile function is on the right.
        In both cases, the fitted model shows in red and the empirical estimate is in blue.
        The vertical cuts on the left correspond to the underlying spline knots.
    }
    \label{fig:ligo-pdf}
\end{figure}

\paragraph{Bivariate copula model}

Letting $\hat{F}$ be the empirical \gls{cdf} of the univariate sample $\{\widetilde{M}_1^{(i)}\}_{i = 1}^{100}$ (equivalently, from $\{\widetilde{M}_2^{(i)}\}_{i = 1}^{100}$), we decided to fit a copula pseudo-sample
$
    \widetilde{\mathcal{D}}_{\text{cop}} =
    \{( \widetilde{U}_1^{(i)}, \widetilde{U}_2^{(i)}) \}_{i = 1}^{100} =
    \{
        (
            \hat{F}(\widetilde{M}_1^{(i)}),
            \hat{F}(\widetilde{M}_2^{(i)})
        )
    \}_{i = 1}^{100}
$
independent of the fitted margin model from the previous section.

The applicability of \glspl{evc} was readily made clear after inspecting $\widetilde{\mathcal{D}}_{\text{cop}}$, where the mirrored data points resembled some characteristic patterns we saw during a random \gls{evc} generation run \textit{\`{a} la} \figurename~\ref{fig:random-pickands}.
Namely, they outlined two curved paths that met at both the lower and upper tail corners.

Data inspection also revealed the absence of upper tail dependence, while lower tail dependence was present.
This behaviour did not match the features of \glspl{evc}: in practice, they never have lower tail dependence, but they do exhibit dependence in the upper tail.
Interestingly, we can resort to \textit{survival copulas} whenever a switch between lower and upper tails is needed~\cite{Eschenburg2013}.
If a random vector with uniform margins $(U, V)$ is distributed according to a copula $C$, then $(1 - U, 1 - V)$ follows the survival copula~\cite{Bouye2000, Georges2001} $\check{C}(u, v) = u + v - 1 + C(1 - u, 1 - v)$.
The bivariate copula sample $\widetilde{\mathcal{D}}_{\text{cop}}$ was accordingly transformed into
$
    \widetilde{\mathcal{D}}_{\text{surv}} =
    \{
        (
            1 - \widetilde{U}_1^{(i)},
            1 - \widetilde{U}_2^{(i)})
        )
    \}_{i = 1}^{100}
$.
Once $\check{C}$ fits $\widetilde{\mathcal{D}}_{\text{surv}}$, the original copula can be retrieved by taking $C$ equal to the survival copula of $\check{C}$.

An extreme-value dependence test~\cite{Ghorbal2009}, implemented in the function \texttt{evTestK} of the \textsf{R} package \texttt{copula}~\cite{Hofert2020, IvanKojadinovic2010, JunYan2007, MariusHofert2011}, confirmed our intuition about the applicability of \glspl{evc}, yielding a p-value higher than 0.35.

The \gls{sbevc} model builds upon a cubic (orthonormal) \gls{zbs} basis with 13 elements, a curvature penalty factor of $10^{-5}$ and interpolation grid sizes of $k + 2 = 80$, in \thref{alg:h-from-a}, and $n + 2 = 200$, in \thref{alg:estimate-W}.
The value of the latter setting is lower than the one employed in the \gls{ss} based on the reduced sample size.
On the other hand, we used the same \gls{cobs} configuration as in Section~\ref{subsec:simulations}.

\figurename~\ref{fig:ligo-zfw} shows the final state of \gls{sbevc}'s internal functions defined in the \gls{wt} domain.
The resulting Bayes density has two main modes, yielding a \gls{wt} with a linear region.
On the other hand, \figurename~\ref{fig:ligo-ah} shows the estimated \gls{pf} and its correspondent $h$ density~\eqref{eq:h-pdf}.
Despite the sample $\widetilde{\mathcal{D}}_{\text{surv}}$ being exchangeable, the $h$ estimate fails to be perfectly symmetrical, with the left peak a bit higher than the one on the right.
This behaviour was not wholly unexpected, given that \gls{sbevc} does not address symmetry specifically.
Taking that into account, \figurename~\ref{fig:ligo-ah} shows that symmetry is reasonably well captured.
Notwithstanding, before reversing the survival model, we decided to apply a symmetrization procedure on the resulting \gls{pf} $A$, considering $\tilde{A}(t) = [A(t) + A(1 - t)] / 2$.

\begin{figure}[ht]
    \center
    \resizebox{\figurewidth}{!}{
        \input{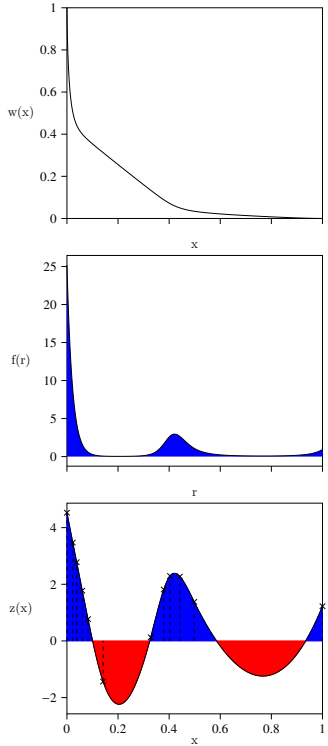}
    }
    \caption{
        Internal function constructs $z$ (zero-integral spline), $f$ (Bayes density) and $W$ (\gls{wt}).
        The plot displays the underlying spline knots of $z$.
    }
    \label{fig:ligo-zfw}
\end{figure}

\begin{figure}[ht]
    \center
    \resizebox{\figurewidth}{!}{
        \input{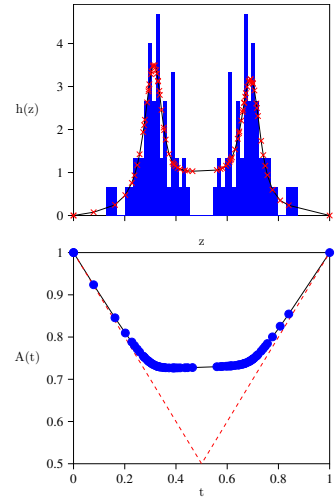}
    }
    \caption{
        \gls{pf} $A$ and target density $h$.
        The knots represent the function values at the $t_i$'s grid defined in \thref{alg:h-from-a}.
    }
    \label{fig:ligo-ah}
\end{figure}

\figurename~\ref{fig:ligo-bayes} shows the Bayesian posterior distribution of \gls{sbevc} \glspl{pf}, using the previous \gls{pll} result as an initial guess for the MCMC sampling.
We drew a million random observations from the posterior distribution.
The job was divided into 100~\citetitle{ASF2021} tasks corresponding to MCMC runs with 100 independent walkers~\cite{ForemanMackey2013}, each one generating 200 observations, with a \textit{burn-in} period of 100.
The confidence interval turned out to be wider than expected but preserving the overall shape.
In turn, \figurename~\ref{fig:ligo-pickands-cobs} shows the \gls{cobs} model, which happens to have a very different shape from \figurename~\ref{fig:ligo-bayes}, lacking a flat central region.
\figurename~\ref{fig:ligo-pickands-cobs} demonstrates that \gls{cobs} captures symmetry well.
Interestingly, the \gls{cobs} model falls outside the confidence interval in \figurename~\ref{fig:ligo-bayes}, indicating that \gls{sbevc} and \gls{cobs} have very different approaches to data fitting.

\begin{figure}[ht]
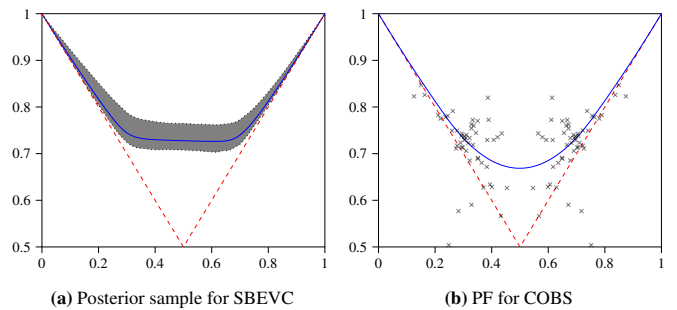

    \center
    \begin{subfigure}{\figurewidth}
        \centering
        \resizebox{\columnwidth}{!}{
            \input{tikz/13-a-ligo-bayes.tex}
        }
        \caption{Posterior sample for \gls{sbevc}}
        \label{fig:ligo-bayes}
    \end{subfigure}%
    \begin{subfigure}{\figurewidth}
        \centering
        \resizebox{\columnwidth}{!}{
            \input{tikz/13-b-ligo-cormier-pickands.tex}
        }
        \caption{\gls{pf} for \gls{cobs}}
        \label{fig:ligo-pickands-cobs}
    \end{subfigure}
    \caption{
        On the left, pointwise confidence interval (98\%) and mean of the posterior \glspl{pf} sample from the MCMC simulation.
        On the right, the \gls{cobs} \gls{pf} model fitting the empirical graph $(Z_i, T_i)$.
    }
    \label{fig:ligo-bayes-pickands-cobs}
\end{figure}

\figurename~\ref{fig:ligo-copulas} shows the corresponding sample-density plots for \gls{sbevc} and \gls{cobs} after reversal of the survival transformation.
The \glspl{pdf} capture the presence of lower tail dependence and the absence of upper tail dependence in both cases.
The correlation is also very similar.
However, there is a remarkable density gap in \figurename~\ref{fig:ligo-sbevc-copula} in the region surrounding the diagonal $\{u = v\}$ that is not present in \figurename~\ref{fig:ligo-cobs-copula}.
This is how the presence or absence in \figurename~\ref{fig:ligo-bayes-pickands-cobs} of a flat region translates to \glspl{pdf}.
Consequently, \gls{sbevc} and \gls{cobs} disagree when evaluating the chances of \glspl{bm} involving similar masses.
However, the fitted observations from $\widetilde{\mathcal{D}}_{\text{cop}}$ seem to better support \gls{sbevc}'s hypothesis than \gls{cobs}'.

\begin{figure}[ht]
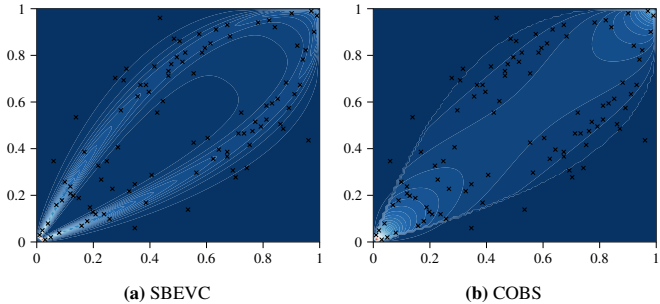

    \center
    \begin{subfigure}{\figurewidth}
        \centering
        \resizebox{\columnwidth}{!}{
            \input{tikz/14-a-ligo-copula.tex}
        }
        \caption{
            \gls{sbevc}
        }
        \label{fig:ligo-sbevc-copula}
    \end{subfigure}%
    \begin{subfigure}{\figurewidth}
        \centering
        \resizebox{\columnwidth}{!}{
            \input{tikz/14-b-ligo-cormier-copula.tex}
        }
        \caption{
            \gls{cobs}
        }
        \label{fig:ligo-cobs-copula}
    \end{subfigure}
    \caption{
        Final copula \glspl{pdf} for \gls{sbevc} and \gls{cobs} after reversal of the survival transformation.
        The data points shown belong to the $\widetilde{\mathcal{D}}_{\text{cop}}$ dataset.
    }
    \label{fig:ligo-copulas}
\end{figure}

Table~\ref{tab:log-likelihood-ligo-cormier} encompasses log-likelihood values of \gls{sbevc} and \gls{cobs} on $\widetilde{\mathcal{D}}_{\text{surv}}$.
The \gls{sbevc} model considered is the original asymmetrical one in \figurename~\ref{fig:ligo-ah}.
Given the apparent similarity between \figurename~\ref{fig:ligo-pickands-cobs} and the instances in \figurename~\ref{fig:simulation-study-gumbel}, we included a Gumbel copula fitted via \gls{mle} in the comparison.
The results confirm the superiority of \gls{sbevc} to \gls{cobs} and the parametric model by a large margin.
The latter is the least fit of the three, just below \gls{cobs}.

\begin{table}
    \centering
    \resizebox{0.5\columnwidth}{!} {
        \begin{tabular}{ccc}
\toprule
$\mathrm{SBEVC}$ & $\mathrm{COBS}$ & $\mathrm{Gumbel}(\theta = 1.87)$ \\
\midrule
           55.47 &           33.39 &                            32.48 \\
\bottomrule
\end{tabular}

    }
    \caption{
        Log-likelihood of different models on the dataset $\widetilde{\mathcal{D}}_{\text{surv}}$.
    }
    \label{tab:log-likelihood-ligo-cormier}
\end{table}

\paragraph{Joint model}

Once fitted both the univariate margin mass model (\glspl{zbs}) and the copula models (\gls{sbevc} and \gls{cobs}), the final joint model immediately followed.
\figurename~\ref{fig:ligo-scatter} plots the original LIGO-Virgo dataset against a random sample generated from each \gls{sbevc} and \gls{cobs} model.
The first and second components are the maximum and the minimum, respectively.
There are ten times more random samples than original data points, for a total of 500.
\figurename~\ref{fig:ligo-scatter-sbevc} and \figurename~\ref{fig:ligo-scatter-cobs} show very similar simulations.
Both capture three main clusters, concentrated in the regions $[0, 20]^2$, $[20, 40] \times [0, 20]$ and $[20, 40]^2$.
It is also worth mentioning that there seems to be a barrier at $\{M_2 = 40\}$; it seems unlikely that giant masses merge.
As pointed out by \figurename~\ref{fig:ligo-copulas}, pictures \figurename~\ref{fig:ligo-scatter-sbevc} and \figurename~\ref{fig:ligo-scatter-cobs} slightly differ in \glspl{bm} with similar masses, being the diagonal just a bit denser in \figurename~\ref{fig:ligo-scatter-cobs}.

\begin{figure}[ht]
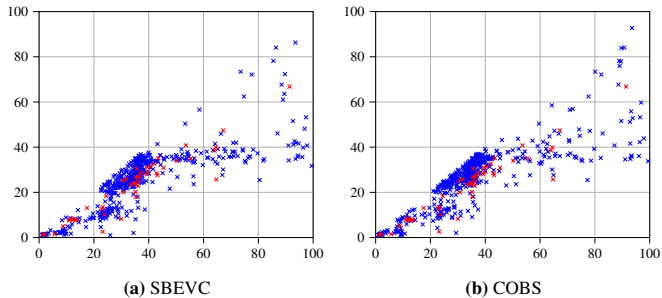

    \center
    \begin{subfigure}{\figurewidth}
        \centering
        \resizebox{\columnwidth}{!}{
            \input{tikz/15-a-ligo-scatter.tex}
        }
        \caption{
            \gls{sbevc}
        }
        \label{fig:ligo-scatter-sbevc}
    \end{subfigure}%
    \begin{subfigure}{\figurewidth}
        \centering
        \resizebox{\columnwidth}{!}{
            \input{tikz/15-b-ligo-cormier-scatter.tex}
        }
        \caption{
            \gls{cobs}
        }
        \label{fig:ligo-scatter-cobs}
    \end{subfigure}
    \caption{
        Original masses (red) against random samples from \gls{sbevc} (blue, on the left) and \gls{cobs} (blue, on the right).
    }
    \label{fig:ligo-scatter}
\end{figure}

    \section{Discussion}

\label{sec:discussion}

\gls{sbevc} is fundamentally different from existing \gls{evc} estimation approaches.
It provides a flexible semiparametric structure, admitting many unconstrained parameters without breaking \gls{pf} assumptions.
Even in the bivariate case, those two feats are difficult to achieve simultaneously~\cite{Vettori2017}.
Moreover, retaining complete control of the parameter space opens up exciting possibilities for statistical modelling and data analysis.
Indeed, \figurename~\ref{fig:random-pickands} and \figurename~\ref{fig:ligo-bayes} represent breakthroughs in \gls{evc} theory.
\figurename~\ref{fig:random-pickands} advances the exploration of the \gls{pf} space with additional smoothing and expressiveness, extending the seminal work by~\citeauthor{Kamnitui2019}~\cite{Kamnitui2019}.
\figurename~\ref{fig:ligo-bayes} shows a Bayesian posterior sample analysis of \glspl{pf}, contributing to solving inferential problems.
Nonparametric approaches, lacking a proper structure, depend on specific samples to build models and can only answer a limited array of inferential questions.

The results obtained in Section~\ref{sec:results} demonstrate the fitting power of \gls{sbevc} on a broad spectrum of \gls{evc} configurations coming from parametric models or even random \glspl{sbevc} like \figurename~\ref{fig:random-pickands}.
Specifically, \gls{sbevc} outperforms a similarly-intended nonparametric approach like \glspl{cobs}~\cite{Cormier2014} on small and medium-sized samples.
This superiority does not lie in the number of parameters, similar in both, but in the more efficient fitting strategy by \gls{sbevc}, especially when data is scarce.
Comparing the top picture in \figurename~\ref{fig:ligo-ah} with \figurename~\ref{fig:ligo-pickands-cobs}, we see that \gls{sbevc} fits a univariate \gls{pdf} via \gls{mle} relying on exact observations, whereas \gls{cobs} attempts a constrained regression on points derived from an empirical copula.
The latter approach will generally be more sensitive to deviations from the \gls{evc} hypothesis, as implied by the fact that some of the fitted points in \figurename~\ref{fig:ligo-pickands-cobs} lie outside the admissible region in \figurename~\ref{fig:pickands}.
Nevertheless, it is remarkable that \gls{sbevc} managed to beat \gls{cobs} in the \gls{rmise} metric, not directly targeted by \gls{sbevc}, which shows the far-reaching capabilities of \gls{mle}.

\glspl{sbevc} represent a vast class of \glspl{evc}.
Two notable copulas fall outside our construction, namely the independence and comonotonic copulas, which correspond to boundary cases of the \gls{pf} geometry.
These limiting cases are usually handled by other means separately and can be approximated in practice through \glspl{sbevc}, as demonstrated in \figurename~\ref{fig:random-pickands}.
Additionally,~\citeauthor{Khoudraji1995}'s device could be applied to refine \glspl{sbevc} in very low correlation settings.
Nonetheless, the construction of \glspl{sbevc} holds the key for fitting even more expressive models by replacing \glspl{zbs} with neural network architectures~\cite{Ling2020}, perhaps at the expense of losing identifiability and a higher risk of overfitting.

Despite all the previous theoretical and practical arguments favouring \gls{sbevc}, the reader may wonder if it is worth the extra execution time and software complexity.
After all, \gls{cobs} can be easily implemented, is already available in \textsf{R}, and provides almost instantaneous results.
What is more, some may even question the practical relevance of complying with \gls{pf} constraints.
Unfortunately, there is no definitive answer to those questions: it depends on the user's goals.
The seeming complexity of \gls{sbevc} is comparable to that of~\citeauthor{HernandezLobato2011}'s proposal.
Theoretical guarantees on the \gls{pf} are nice to have, ensuring the \gls{evc} dependence structure and proper random behaviour in simulation.
In most situations, for exploratory data analysis, a compliant nonparametric method like \gls{cobs} may be the right choice.
\gls{sbevc} may not be a good option if there are tight time constraints.
However, if a more powerful fit is required, data deviates from \gls{evc} assumptions, there is not enough data to confidently apply \gls{cobs}, or one would wish to explore inferential aspects, then \gls{sbevc} might be the better, if not the only one.

    \section{Conclusions}

\label{sec:conclusions}

We have introduced a novel semiparametric approach for estimating bivariate \glspl{evc}.
To our knowledge, it is the first time such an attempt has been made.
\gls{sbevc} allows many parameters while complying with \gls{pf} constraints.
The construction harbours an intriguing potential for Bayesian inference and deep learning.
\glspl{sbevc} represent a vast class of \glspl{evc}, encompassing a broad spectrum of dependence strengths and asymmetries.
Several \glspl{sbevc}' convergence and association properties have been explored.
We have also presented all the algorithms required for effectively and efficiently running the estimation process.
The \gls{ss} shows promising results for \gls{sbevc} in a wide range of sampling configurations.
Specifically, \gls{sbevc} produces significantly lower \gls{rmise} values than \gls{cobs}.
Finally, the case study demonstrates that \gls{sbevc} fits small samples more flexibly than conventional methods.

    \appendix
    \section{Proofs}

\label{sec:proofs}

\begin{proof}[Proof of \thref{prop:A-W-equivalence}]
    Let $W$ be as defined above.
    We will see that $A$ as defined in~\eqref{eq:rotation} is a \gls{pf} with the additional constraint above.

    First, note that $t(0) = 0$, $t(1) = 1$.
    By continuity of $W$, this implies that $\text{Ran}(t) = [0, 1]$.
    Then, for $t(x)$ to be an automorphism of $[0, 1]$, it suffices to see that it is one-to-one.
    Let us suppose that $t(x_1) = t(x_2)$ for some $x_1, x_2 \in [0, 1]$, $x_1 < x_2$.
    Then, $W(x_2) - W(x_1) = x_2 - x_1 > 0$, which leads to a contradiction with $W$ being non-increasing.
    Therefore, $t(x)$ is an automorphism of $[0, 1]$, so $A$
    in~\eqref{eq:rotation} is well-defined as a function of a single variable
    $t \in [0, 1]$.

    Next, letting the support lines $t_+(x) \equiv t(x)$ and $t_{-}(x) \equiv 1 - t(x)$, it is easy to check that $t_{\pm}(x) = \left( 1 \pm x \mp W(x) \right) / 2 $ and, since both $x$ and $W(x)$ are non-negative (otherwise $W$ would not be non-increasing, with \text{Ran}(W) = $[0, 1]$), we may conclude $A(t(x)) \geq \max\{ t_{+}(x), t_{-}(x) \}$.
    Furthermore, $A(t(x)) > 1 - t(x)$ for all $x \in (0, 1] \supset (0, 1/2]$.

    Since $W'(x) \leq 0$ and $W''(x) \geq 0$, it follows that $A''(t) \geq 0$, for all $t \in (0, 1)$, and hence $A$ is convex.
    This finishes the proof that~\eqref{eq:rotation} defines a \gls{pf} such that $A(t) > 1 - t$, for all $t \in (0, 1/2]$.

    Conversely, let $A$ be a \gls{pf} with the latter additional constraint.
    We will similarly show that $W$ as defined in~\eqref{eq:rotation-inverse} is 2-monotone and satisfies $W(0) = 1$ and $W(1) = 0$.

    First, note that $x(0) = 0$ and $x(1) = 0$.
    By continuity of $A$, this implies that $\text{Ran}(x) = [0, 1]$.
    Then, for $x(t)$ to be an automorphism, it suffices to see that $x(t)$ is one-to-one.
    Let us suppose that $x(t_1) = x(t_2)$ for some $t_1, t_2 \in [0, 1]$, $t_1 < t_2$.
    This implies that $[A(t_2) - A(t_1)] / (t_2 - t_1) = -1$ and, since $A$ is convex, we must conclude that $A(t) = 1 - t$ for all $t \in (t_1, t_2]$.
    Clearly, $t_2 \leq 1/2$, because $1 - t < t$ if $t > 1/2$ and, on the other hand, $A(t) \geq \max \{ t, 1 - t \}$.
    Therefore, $(t_1, t_2] \subset (0, 1/2]$, which leads to a contradiction with $A(t) > 1 - t$ over $(0, 1/2]$.
    Hence, $x(t)$ must be one-to-one and, all in all, an automorphism of $[0, 1]$.
    This, in turn, means that $W$ in~\eqref{eq:rotation-inverse} is well-defined as a function of a single variable in $[0, 1]$.

    Next, it is easy to check both $W(0) = 1$ and $W(1) = 0$, bearing in mind that $A(0) = A(1) = 1$.

    Since $A(t) > 1 - t$ for $t \in (0, 1/2]$ and $A$ being convex, we have $A'(t) > -1$ and the denominator in both~\eqref{eq:pickands-first-derivative-inverse} and~\eqref{eq:pickands-second-derivative-inverse} is well-defined.
    Moreover, $A'(t) \leq 1$, otherwise we would have $A(1 - \epsilon) < 1 - \epsilon$ for a sufficiently small $\epsilon$.
    Therefore, $W'(x) \leq x$ for all $x \in (0, 1)$.
    On the other hand, the convexity of $W$ follows directly from $A''(t) \geq 0$.

    Finally, the derivatives~\eqref{eq:pickands-first-derivative} and~\eqref{eq:pickands-second-derivative}, on the one hand, and ~\eqref{eq:pickands-first-derivative-inverse} and~\eqref{eq:pickands-second-derivative-inverse}, on the other, directly follow by differentiating~\eqref{eq:rotation} and~\eqref{eq:rotation-inverse}.
\end{proof}

\begin{proof}[Proof of \thref{prop:fn-converge-imply-wn-converge}]
    It suffices to check that, for all $x \in [0, 1]$,
    \begin{equation*}
        \begin{aligned}
            \lvert W(x) - W_n(x) \rvert
            &=
            \left\lvert
            \int_0^1 \left( 1 - \frac{x}{r} \right)_{+} [f(r) - f_n(r)]
            \ dr
            \right\rvert \\
            &\leq
            \int_0^1 \lvert f(r) - f_n(r) \rvert \ dr
            \,,
        \end{aligned}
    \end{equation*}
    where $(\cdot)_+$ denotes the non-negative part of the argument, and then apply Scheff\'e's theorem~\eqref{eq:scheffe-theorem}.
    Similarly, considering the compact subset $[x_0, 1]$, for some $x_0 > 0$, we have, for all $x \in [x_0, 1]$, $\lvert W'(x) - W_n'(x) \rvert \leq 2 \ d_{\text{TV}}(f, f_n) / x_0$.
\end{proof}

\begin{proof}[Proof of \thref{prop:graph-convergence}]
    It follows from the equivalence between uniform convergence and function graph convergence~\cite{Waterhouse1976} for functions with compact domain and range.
    Since the $W_n$'s uniformly converge to a continuous function $W$, the sequence of the graphs of the $W_n$'s has its limit in the graph of $W$.
    Then, note that the graphs of $A_n$ and $A$ are \textit{affine transformations}~\eqref{eq:rotation} of the graphs of $W_n$ and $W$, respectively.
    This ensures, by continuity, that the graphs of the $A_n$'s tend to that of $A$.
    Finally, graph convergence for the $A_n$'s implies uniform convergence to $A$ itself.

    The result for the first derivatives follows similarly.
    Instead of an affine map, the functions mapping the graph of $W$ to that of $A$ and vice versa are, respectively,
    \begin{multicols}{2}
        \noindent
        \begin{equation*}
            \mathbb{T}(x, w') =
            \left[ t(x), \frac{1 + w'}{1 - w'} \right]
            \,,
        \end{equation*}
        \columnbreak
        \begin{equation*}
            \mathbb{X}(t, a') =
            \left[ x(t), \frac{a' - 1}{a' + 1} \right]
            \,.
        \end{equation*}
    \end{multicols}
    \vspace*{-\multicolsep}
    Both are the inverse of one another because of~\eqref{eq:pickands-first-derivative} and~\eqref{eq:pickands-first-derivative-inverse}.
    Both functions are continuous.
    Hence, they preserve compactness and graph convergence.

    To see that $\{A_n'\}_{n = 1}^{\infty}$ compactly converges to $A'$, consider any compact set $\mathcal{K} = [t_0, 1]$, for $t_0 > 0$.
    Then, consider the sequence of restricted function graphs $\{\mathcal{G}[A_n'|_{\mathcal{K}}]\}_{n = 1}^{\infty}$ and apply $\mathbb{X}$ to every element to obtain another sequence $\{\mathcal{G}[W_n'|_{\mathcal{\mathbb{X}(K)}}]\}_{n = 1}^{\infty}$.
    Now, $\mathbb{X}(\mathcal{K})$ is a compact set, so $W_n'|_{\mathcal{\mathbb{X}(K)}}$ uniformly converges to $W'|_{\mathcal{\mathbb{X}(K)}}$ and, because of ~\cite{Waterhouse1976}, the graph sequence approaches $\mathcal{G}[W'|_{\mathcal{\mathbb{X}(K)}}]$.
    The argument finishes by noting that, since $\mathbb{T}$ is continuous and since $\mathbb{T}(\mathcal{G}[W_n'|_{\mathcal{\mathbb{X}(K)}}]) = \mathcal{G}[A_n'|_{\mathcal{K}}]$ and $\mathbb{T}(\mathcal{G}[W'|_{\mathcal{\mathbb{X}(K)}}]) = \mathcal{G}[A'|_{\mathcal{K}}]$, the graphs of the $A_n'|_{\mathcal{K}}$'s tend to that of $A'|_{\mathcal{K}}$.
\end{proof}

\begin{proof}[Proof of \thref{prop:convergence-tvd-pdfs}]
    First, note that continuity ensures that the limit $p$ is also bounded by the same $K$.
    Denoting $I_q = \int_0^1 e^q$, some easy calculations show that
    \begin{equation*}
        |f(x) - f_n(x)|
        \leq
        \frac
        {e^{p(x)} |I_p - I_{p_n}| + I_p \ |e^{p(x)} - e^{p_n(x)}|}
        {I_p \ I_{p_n}}
        \,.
    \end{equation*}
    Now, the integrals are bounded, namely $e^{-K} \leq I_q \leq e^K$.
    On the other hand, $|e^{p(x)} - e^{p_n(x)}| \leq e^K |p(x) - p_n(x)|$, using the \textit{mean value theorem} and the fact that both functions are bounded by $K$.
    All in all,
    \begin{equation*}
        |f(x) - f_n(x)|
        \leq
        e^{4K} \left( \int_0^1 |p(y) - p_n(y)| \ dy + |p(x) - p_n(x)| \right)
        \,.
    \end{equation*}
    Integrating both sides of the last inequality and using Jensen's inequality, we finally get $d_{\text{TV}}(f, f_n) \leq e^{4K} \lVert p - p_n \rVert_2$.
\end{proof}

\begin{proof}[Proof of \thref{cor:convergence-partial-derivatives}]
    It follows from all the previous convergence results and the form of the partial derivative of an \gls{evc}~\cite{Eschenburg2013, Doyon2013}, where all the terms uniformly converge on compact sets.
\end{proof}

\begin{proof}[Proof of \thref{alg:estimate-W}]
    Let us set aside the normalization by $\bar{W}_0$ for a moment.
    Clearly, $\bar{W}_i''$ is a straight approximation for $W_{\bm{\theta}}''(r_i)$.
    It only remains to check that~\eqref{eq:recurrence-W} and~\eqref{eq:recurrence-W-derivative} provide good approximations for $W_{\bm{\theta}}(r_i)$ and $W_{\bm{\theta}}'(r_i)$, respectively.
    It suffices to see that
    \begin{equation*}
        \bar{W}_i'
        = \sum_{j = i}^n \bar{W}_j' - \bar{W}_{j + 1}'
        =
        - \sum_{j = i}^n Q_j
        \approx
        -\int_{r_i}^{1} \frac{f_{\bm{\theta}}(r)}{r} \ dr
    \end{equation*}
    and, taking into account~\eqref{eq:W-survival},
    \begin{alignat*}{3}
        \bar{W}_i
        &= \sum_{j = i}^n \bar{W}_j - \bar{W}_{j + 1}
        &&=
        \sum_{j = i}^n
        s_j \bar{W}_j' - s_{j + 1} \bar{W}_{j + 1}' + P_j \\
        &=
        s_i \bar{W}_i' +
        \sum_{j = i}^n
        P_j
        &&\approx
        r_i W_{\bm{\theta}}'(r_i) + \int_{r_i}^{1} f_{\bm{\theta}}(r) \ dr
        \,,
    \end{alignat*}
    where we have used that $P_i$ and $Q_i$ are the trapezoidal rule approximations for $\int_{r_i}^{r_{i + 1}} f_{\bm{\theta}}$ and $\int_{r_i}^{r_{i + 1}} f_{\bm{\theta}}(r) / r \ dr$, respectively.
    Also, implicit in the previous argument was the approximation $s_0 = \epsilon \approx 0$, used to avoid infinite values.
    Finally, the last normalization step aims to stabilize the estimation process against numerical errors, enforcing the constraint $W_{\bm{\theta}}(0) = 1$.
\end{proof}

\begin{proof}[Proof of \thref{alg:h-from-a}]
    The rationale of the algorithm is relatively straightforward.
    Equation~\eqref{eq:h-at-grid} mimics~\eqref{eq:h-pdf}, where $A_{\bm{\theta}}$ and its derivatives are evaluated over $t_i$ indirectly through equations~\eqref{eq:rotation},~\eqref{eq:pickands-first-derivative} and~\eqref{eq:pickands-second-derivative}, requiring only the $x_i$'s and approximations of $W_{\bm{\theta}}$ and its derivatives at those points.
    On the other hand, the reader can easily check that $h_{\bm{\theta}}(0) = h_{\bm{\theta}}(1) = 0$, considering all the constraints imposed by \gls{sbevc}: $A_{\bm{\theta}}'(0^+) = -1$, $A_{\bm{\theta}}'(1^-) = 1$ and $f_{\bm{\theta}}(x) > 0$ for all $x \in [0, 1]$, among others.
    The case at the $0$ endpoint is not trivial, but nearly so.
    After simplification, we arrive at
    \begin{equation*}
        h_{\bm{\theta}}(0) = 2 f_{\bm{\theta}}(0)
        \lim_{x \rightarrow 0^+}
        \frac{1 + x - W_{\bm{\theta}}(x)}{x (1 - W_{\bm{\theta}}'(x))^3} \,.
    \end{equation*}
    Repeatedly applying L'H\^opital's rule, we can check that the denominator tends to zero and, eventually, the whole limit also tends to zero.
    Finally, the step involving the integral ensures $\int_0^1 \tilde{h} = 1$, making a $\tilde{h}$ a true \gls{pdf}, which was not automatically granted by the linear interpolation strategy.
\end{proof}

\defbibfilter{references}{
    type=article or
    type=book or
    type=incollection or
    type=inproceedings or
    type=manual or
    type=thesis
}

\defbibfilter{resources}{
    type=software or
    type=online
}

\renewcommand*{\bibfont}{\footnotesize}

\printbibliography[filter=references, title={References}]
\printbibliography[filter=resources, title={Resources}]

\end{document}